\DeclareMathOperator*{\argmin}{arg\,min}
\newtheorem{theorem}{Theorem}
\newtheorem{corollary}{Corollary}
\newtheorem{lemma}{Lemma}
\def\Hw{\mathbf{H}(\boldsymbol{\omega})}
\def\xhatw{\mathbf{\hat{x}}(\boldsymbol{\omega})}
\def\bw{\boldsymbol{\omega}}
\def\bw{\boldsymbol{\omega}}
\def\diagw{\text{diag}(\bw)}
\title{Node-Adaptive Regularization for \\ Graph Signal Reconstruction}
\author{Maosheng Yang, Mario Coutino, Geert Leus and Elvin Isufi\thanks{The authors are with Faculty of Electrical Engineering, Mathematics \& Computer Science, Delft University of Technology, Delft, The Netherlands. Part of this work is presented in \cite{yang2020}.
This research is supported in part by the ASPIRE
project (project 14926 within the STW OTP programme), financed by the Netherlands Organization for Scientific Research (NWO). Mario Coutino is partially supported by CONACYT. Maosheng Yang's work is supported by the Microelectronic-M.Sc. track scholarship from Delft University of Technology.}}
\begin{document}

\maketitle
\begin{abstract}
A critical task in graph signal processing is to estimate the true signal from noisy observations over a subset of nodes, also known as the reconstruction problem. In this paper, we propose a node-adaptive regularization for graph signal reconstruction, which surmounts the conventional Tikhonov regularization, giving rise to more degrees of freedom; hence, an improved performance. We formulate the node-adaptive graph signal denoising problem, study its bias-variance trade-off, and identify conditions under which a lower mean squared error and variance can be obtained with respect to Tikhonov regularization. Compared with existing approaches, the node-adaptive regularization enjoys more general priors on the local signal variation, which can be obtained by optimally designing the regularization weights based on Prony's method or semidefinite programming. As these approaches require additional prior knowledge, we also propose a \textit{minimax} (worst-case) strategy to address instances where this extra information is unavailable. Numerical experiments with synthetic and real data corroborate the proposed regularization strategy for graph signal denoising and interpolation, and show its improved performance compared with competing alternatives.
\end{abstract}
\begin{IEEEkeywords}
Bias-variance trade-off, Graph regularization, Graph signal denoising, Graph signal processing
\end{IEEEkeywords}
\section{Introduction} \label{sec1}
Graphs, as models to represent data, do not only capture information about entities (nodes) that comprise them, but also encode the interactions between these entities (edges). Graphs are promising to deal with high-dimensional data arising in many applications such as social media, sensor networks, and transportation networks, to name a few \cite{huang2018, russell2011, mehri2012, emerging_gsp, deshpande2004, lee2015, scroggie2000 }. To analyze such networked data, tools from signal processing have been extended in the area of \textit{graph signal processing} (GSP) \cite{emerging_gsp},\cite{gsp_overview}.

As in classical signal processing, the task of estimating the underlying signal from noisy observations, is critical in GSP as well. Over the past few years, a large amount of research has been focused on this topic~\cite{emerging_gsp,gsp_overview,local_set_based,localized_iterative,distr_tracking,adaptive_sampling,wiener,TV_primal_dual,cheb,biasvar,krr1,mul_ker2,mul_kernel}. Within this body of work, most approaches solve a least squares problem penalized by different regularizers. The most commonly penalization method is the one based on the so-called Tikhonov regularization \cite{groetsch1984}. It recovers the graph signal by penalizing the data fitting term with the graph Laplacian quadratic form, also known as graph Laplacian regularizer (GLR) \cite{chung1997, biasvar}. This GLR based method and its variants have succeeded in many applications such as image smoothing, point cloud denoising, semi-supervised learning, data classification and graph signal denoising \cite{emerging_gsp, milanfar2012, dspog:freq,tomasi1998, liu2016, pang2017,dinesh2020}.\footnote{Note that in \cite{tomasi1998, pang2017,dinesh2020}, the graph edge weights (the graph Laplacian) are not fixed but (designed) updated in a specific way for tasks like image and point could denoising. This edge weights updating idea seem to be related to the idea introduced in our paper, but they are specially designed for/based on the given tasks, for instance by exploiting the properties of images and point clouds. This is conceptually different from the idea in this paper.}

Other regularizers have also shown their applicability in different scenarios. For example, the definition of total variation  of discrete signals has been extended to graphs, i.e., the graph signal total variation notion based on the adjacency matrix, graph Laplacian and their variants \cite{tibshirani2005, mallat1999, dspog:freq, couprie2013}. In \cite{wang2016trend}, it is used as a graph signal regularizer to perform trend filtering on graphs, effectively extending the generalized Lasso problem~\cite{tibshirani2011} to graphs. By generalizing total variation to the graph setting, we can consider higher-order graph signal differences across the nodes and performs graph signal reconstruction in instances where the graph signal is not necessarily globally smooth. Similarly, in \cite{krr1,mul_ker2,mul_kernel}, the notion of graph Laplacian regularization is extended by means of graph kernels such as the diffusion kernel, random walk based kernels and graph bandlimited kernels. As these regularizers tend to increase the computational complexity of the reconstruction task, several works have been focused on reducing of the complexity for finding regularized solutions by means of iterative and distributed implementations, such as adaptive graph signal estimation, primal-dual gradient method, Chebyshev polynomial approximation and so on \cite{local_set_based,localized_iterative,distr_tracking,adaptive_sampling,wiener,TV_primal_dual,cheb}.

Although state-of-the-art address the graph signal reconstruction, almost all methods adopt a single parameter to control the reconstruction performance; that is, to balance the fitting error with regularization term. For instance, in the recent work~\cite{biasvar}, the authors investigate the bias-variance trade-off for Tikhonov regularizer and propose an optimally design strategy for the regularization parameter which matches the order of the bias and variance term. Despite this effort, it is clear that a scalar regularization parameter is insufficient to impose local penalties over the graph signal. To see this, consider an instance of the Tikhonov denoising that penalizes the error fitting term with signal smoothness. A scalar regularization weight can only penalize in terms of the \emph{global} smoothness, instead of local variability.  Such a local penalty is important, for instance, for signals that are piecewise-smooth or piecewise-constant \cite{chen2016, wang2016trend,tomasi1998,liu2016}, or in some networks where we want to detect anomaly nodes, etc. The global penalty term limits the denoising performance to deal with signals with general characteristics. Thus an increased number of regularization parameters for handling such situations is needed. In an attempt to increase the degrees of freedom (DoFs) of the regularization, in~\cite{TV_primal_dual}, it is proposed to minimize the signal total smoothness by regularizing separately the fitting error of each individual nodal measurement. Despite that this approach can be considered as a multi-parameter based regularization, it only focuses on a measurement-wise regularization and ignores the coupling between the graph signal and the topology.

Due to the need of a multi-parameter regularization, which considers the connectivity of the graph, in this paper, we propose a \emph{node-adaptive (NA) regularizer} to increase the DoFs by applying node-dependent weights to fine-tune the trade-off between the fitting error and regularization term. With these enhanced DoFs, we expect to achieve a better reconstruction performance without affecting the method complexity. For the sake of exposition, we solely focus on the \emph{NA Tikhonov-based reconstruction} to make direct comparisons with earlier works, e.g.,~\cite{biasvar}, however, our findings can be generalized with the adequate changes to other regularization penalties, e.g., a graph shift operator based ridge regression penalty \cite{mul_ker2,mul_kernel}. 

\subsection{Contributions}
To be more specific, the main questions we address in this work are: (i) \emph{how the bias-variance trade-off behaves for the NA Tikhonov regularization?} and (ii) \emph{how to design the NA weights optimally?} Aimed to give answers to these questions, we make the following main contributions:
\begin{itemize}
\item[1)] We formulate the NA Tikhonov regularization problem under a deterministic signal model assumption, derive its solution in closed-form, and study the respective bias-variance trade-off. 
\item[2)] We derive the conditions for the NA weights to allow a smaller mean-squared error (MSE) and variance compared to traditional Tikhonov regularization.
\item[3)] To design the NA weights, we propose three methods based on  minimizing the MSE. The first two methods leverage  Prony's method from classical signal processing and convex relaxation techniques. The third method uses a \textit{minimax} strategy to design the weights in a worst-case setting. The latter addresses scenarios where only upper- and lower-bounds of graph signals are available.
\item[4)] We corroborate the theoretical findings of this work, using both synthetic and real-world data, and show that the proposed NA denoising performs well with respect to the competing alternatives, especially in low signal-to-noise ratio (SNR) settings.
\end{itemize}

\subsection{Outline and Notation}
The rest of this paper is structured as follows: Section~\ref{sec2} reviews the conventional Tikhonov regularizer. Section~\ref{sec3} formulates the NA Tikhonov regularization problem and studies its bias-variance trade-off. Section \ref{sec4} presents the proposed optimal designs for the NA regularizer weights. Section~\ref{sec5} contains numerical results to validate the theoretical findings. Finally, Section~\ref{sec6} concludes the paper.

\textbf{Notation:} Scalars, vectors, matrices and sets are denoted by lowercase letters ($x$), lowercase bold letters ($\mathbf{x}$), uppercase bold letters ($\mathbf{X}$), and calligraphic letters ($\mathcal{X}$), respectively. $x_i$ represents the $i$-th entry of the vector $\bf{x}$, and $X_{ij}$ denotes the $(i,j)$th element of the matrix $\mathbf{X}$. $\mathbf{X}^\top$ and $\mathbf{X}^{-1}$ are the transpose and inverse of the matrix $\mathbf{X}$. $\mathbf{1}$ and $\mathbf{I}$ are the all-one vector and identity matrix. $\text{diag}(\cdot)$ is a diagonal matrix with its arguments on the main diagonal. $\lVert\cdot\lVert_2$ is the Euclidean norm.
We use $\mathbb{R}$ to denote the set of real numbers and $\mathcal{S}_+^{N\times N}$ to denote the set of $N\times N$ positive semi-definite matrices.
$\mathbb{E}[\cdot]$ is the expectation operator, $\tr(\cdot)$ is the trace operator  and $\text{supp}(\mathbf{X})$ is the support of $\mathbf{X}$.  If $\mathbf{x}$ is a random variable, then its covariance matrix is $\text{cov}(\mathbf{x})=\mathbb{E}[(\mathbf{x}-\mathbb{E}[\mathbf{x}]) (\mathbf{x}-\mathbb{E}[\mathbf{x}])^\top]$. 
\section{Node-invariant regularizer} \label{sec2}
Consider an \emph{undirected} graph $\mathcal{G}=\left( \mathcal{V},\mathcal{E}\right)$, where $\mathcal{V}=\{1,\dots,N\}$ is the set of $N$ nodes and $\mathcal{E}$ the set of $M$ edges such that if nodes $i$ and $j$ are connected, then $\left(i,j \right) \in \mathcal{E}$. The neighborhood set of node $i$ is $\mathcal{N}_i=\left\lbrace j|\left(i,j \right)\in \mathcal{E}  \right\rbrace $. The graph can be represented by its adjacency matrix $\bbA$ with entry $A_{ij}\geq0$ if $\left(i,j \right) \in \mathcal{E}$ and $A_{ij}=0$, otherwise. Alternatively, the graph can also be represented by its graph Laplacian matrix
$\mathbf{L=\text{diag}(\mathbf{A1})-A}$. 

On the vertices of $\ccalG$, we define a graph signal $\mathbf{x}~=~[x_1,\dots,x_N]^\top$ whose $i$-th element $x_i$ is the signal value on node $i$. The graph signal smoothness can be measured by the \emph{graph Laplacian quadratic form}
\begin{equation}\label{sigvar}
	S_2(\mathbf{x}) = \frac{1}{2}\sum_{i\in\mathcal{V}} \sum_{j\in\mathcal{N}_i}A_{ij}(x_i-x_j)^2 = \mathbf{x}^\top\mathbf{Lx}.
\end{equation}
A signal $\bbx$ is said to be \emph{smooth} over $\ccalG$, if $S_2(\mathbf{x})$ is small \cite{emerging_gsp}. This measure is commonly used as a regularizer to recover smooth graph signals from noisy measurements \cite{emerging_gsp,gsp_overview,biasvar}.

%

Consider the measurements $\bby = \bbx^* + \bbn$, where $\mathbf{x}^*\in \mathbb{R}^{N}$ is a smooth graph signal and $\mathbf{n}\in\mathbb{R}^{N}$ is zero-mean noise with covariance matrix $\boldsymbol{\Sigma}$.
Assuming $\mathbf{x}^*$ is smooth over the graph, we can recover it by solving the Tikhonov regularized problem

\begin{equation}\label{eq.regls}
	\mathbf{\hat{x}}(\omega_0) = \underset{\mathbf{x}\in\mathbb{R}^N}{\argmin}\,\,\, \lVert\mathbf{y-x}\lVert^2_2 \,+ \,\omega_0 \mathbf{x}^\top\mathbf{Lx}
\end{equation}
where scalar $\omega_0>0$ is the regularization parameter. The first term in \eqref{eq.regls} forces the estimate to be close to the observed signal (fitting term), while the second term promotes global signal smoothness. The trade-off between these two quantities is governed by the scalar weight $\omega_0$. The closed-formed solution of \eqref{eq.regls} is 
\begin{equation}\label{eq.reglssol}
	\mathbf{\hat{x}}(\omega_0) = \left( \mathbf{I}+\omega_0\mathbf{L} \right) ^{-1}\mathbf{y}:=\mathbf{H}(\omega_0)\mathbf{y}
\end{equation}
where we defined the graph filter $\mathbf{H}(\omega_0)\triangleq\left( \mathbf{I}+\omega_0\mathbf{L} \right) ^{-1}$ \cite{arma}. 

\def\b{\textbf{b}}
\def\bx{\mathbf{x}}
The smooth regularizer in~\eqref{eq.regls} biases the estimator $\hbx(\omega_0)$ in~\eqref{eq.reglssol}. The bias \b$(\omega_0)$ is given by
\begin{equation}\label{bias_w0}
\begin{split}
        \b(\omega_0) &=  \mathbb{E}[\mathbf{\hat{x}}(\omega_0)]-\mathbf{x}^* = 
        (\mathbf{H}(\omega_0)-\mathbf{I})\mathbf{x}^*
\end{split}
\end{equation}
which is controlled by $\omega_0$. The variance is also controlled by $\omega_0$ and has the form

\begin{align}\label{var_w0}
    \text{var}(\omega_0) &= \mathbb{E} [ \lVert \hat{\bx}(\omega_0) - \mathbb{E}(\hat{\bx}(\omega_0)) \lVert^2]
    = \tr(\mathbf{H}^2(\omega_0)\boldsymbol\Sigma).
\end{align}

By combining the bias and the variance, we can quantify the performance of the estimator in \eqref{eq.reglssol} through its MSE
\begin{equation}\label{mse_w0}
\begin{split}
\text{mse}(\omega_0) & = \mathbb{E}[ \lVert \mathbf{\hat{x}}(\omega_0) - \mathbf{x^*} \lVert_2^2 ]=  \lVert \b(\omega_0) \lVert_2^2 + \text{var}(\omega_0) \\
& = \tr(\left( \mathbf{I-H}(\omega_0)\right) ^2\mathbf{x^*x}^{*\top}) + \tr( \mathbf{H}^2(\omega_0)\boldsymbol{\Sigma}).
\end{split}	
\end{equation}
The MSE shows the bias-variance trade-off imposed by the smoothness regularizer. If  scalar $\omega_0$ is reduced, we achieve a lower bias but a higher variance, and vice-versa.

Ways to select the parameter $\omega_0$ can be devised using, e.g., the discrepancy principle \cite{discrepancy_1993, Anzengruber_2009}, the $L$-curve criterion \cite{L-curve_hansen} and the generalized cross-validation \cite{GCV_gene}. In GSP, a natural optimal parameter selection criterion is based on the minimization of the MSE, which is investigated in \cite[Thm. 3]{biasvar}, where parameter $\omega_0$ is optimally found by matching the order of the bias contribution and the variance. %

The regularizer in \eqref{sigvar} is a global graph signal measure, i.e., the signal smoothness over the entire graph. And the impact of this global regularizer in~\eqref{eq.regls} is controlled by the scalar $\omega_0$. We can differently think of $\omega_0$ as a common coefficient that equally pre-weights the signal on all nodes when computing the smoothness measure, i.e., $\sqrt{\omega_0}\bx^\top\mathbf{L}\sqrt{\omega_0}\bx$. Due to this characteristic, we refer to problem~\eqref{eq.regls} as a node-invariant (NI) regularization problem.
If the signal is globally smooth over the underlying graph, the NI regularization will result in a good signal reconstruction performance. However, as it only focuses on global behaviours, ignoring local signal details, it might lead to unsatisfactory recovery performance when the graph signal is not globally smooth and/or when the noise level is different at each node.

Although the graph signal is not globally smooth in most situations, it presents a smooth behaviour in different regions, e.g., smoothness over local details. This prior generalizes the global smoothness and is amenable for learning. To enhance the role of local signal detail and improve the estimator's MSE, we propose a graph signal regularization strategy that substitutes the global penalty term $S_2(\bx)$ with a local penalty on each node. The proposed approach, named node-adaptive (NA) regularization, allows each node $i$ to weight its signal $x_i$ with an individual scalar $\omega_i$. The enhanced DoFs of the NA regularizer can in turn improve the bias-variance trade-off and reduce the overall MSE. 
To understand the NA regularizer, we first conduct a detailed bias-variance trade-off analysis and then propose a design strategy for the node-adaptive weights $\omega_1, \ldots, \omega_N$ to optimize such a trade-off. Therefore, our goal in the upcoming sections is to formalize the NA regularization problem, analyze its statistical properties, and develop optimal weight design strategies that minimize the MSE.
 
\section{Node-adaptive regularizer} \label{sec3}

Consider a vector of parameters $\boldsymbol{\omega}=[\omega_1,\dots,\omega_N]^\top \in \reals^N$ and define the \emph{node-adaptive Laplacian operator}
\def\Sw{\mathbf{S}(\boldsymbol{\omega})}
\begin{equation}\label{eq:s}
   \mathbf{S}(\boldsymbol{\omega})\triangleq\text{diag}(\boldsymbol{\omega})\mathbf{L}\text{diag}(\boldsymbol{\omega}) = \boldsymbol{\omega\omega}^\top\odot\mathbf{L}
\end{equation}
where $\odot$ is the element-wise Hadamard product. Note that for any $\bbomega$, the parametric shift operator matrix $\mathbf{S}(\boldsymbol{\omega})$ is positive semi-definite (see Lemma~\ref{lemma4} in Appendix~\ref{appendix.A}) and has the same support as the graph Laplacian $\bbL$ --properties that will result useful in the sequel. The parameterized graph Laplacian quadratic form for \eqref{eq:s} has the form [cf.\eqref{sigvar}]
\begin{equation}\label{eq.naQform}
    \begin{split}
        S_2(\bx,\bw) &= \mathbf{x}^\top\mathbf{S}(\bw)\mathbf{x} = (\diagw \mathbf{x})^\top \mathbf{L}(\diagw \mathbf{x}) \\ &= \frac{1}{2}\sum_{i\in\mathcal{V}} \sum_{j\in\mathcal{N}_i}A_{ij}(\omega_i x_i-\omega_j x_j)^2.
    \end{split}
\end{equation}
This quadratic form can be seen as, first pre-weighting each entry $x_i$ of the signal $\bbx$ with a parameter $w_i$ and then computing the regular quadratic measure in \eqref{sigvar} w.r.t. Laplacian $\bbL$. We can now use \eqref{eq.naQform} to recover a graph signal $\bbx^*$ from the noisy measurements $\bby$ by solving the convex problem
\begin{equation}\label{eq.nvls}
\begin{split}
	 \mathbf{\hat{x}}(\boldsymbol{\omega}) 
	 = \argmin_{\mathbf{x}\in\mathbb{R}^N} \,\,\, & \lVert\mathbf{y-x}\lVert^2_2 \,+ \, S_2(\bx,\bw) ,
\end{split}
\end{equation}
where the trade-off between the fitting error and the regularization term is now controlled by the $N$ parameters in $\bbomega$. We name \eqref{eq.nvls} as \emph{node-adaptive graph signal regularization}. We give the following comments on this regularization framework.
\begin{itemize}
    \item When $\bw = \omega_0\bm 1$, NA graph signal regularization particularizes to the NI case in \eqref{eq.regls}.
    \item NA regularization is not restricted to the form in \eqref{eq.nvls}. The same generalization can be applied when variants of the graph Laplacian are used in the regularizer. For example, if we use a random walk graph Laplacian in \eqref{eq.nvls}, the GLR used in \cite{liu2016} is a special case.
    \item We can see that NA graph signal regularization uses a parametric graph Laplacian of form \eqref{eq:s} where the edge weights are adapted by $\bw\bw^\top$ which will be optimally designed in the sense of minimizing MSE. In this way, our work acts as an alternative or a generalization to the proposed GLR in image and point could denoising \cite{tomasi1998, dinesh2020} where the graph edge weights are (designed) updated in a specific task-driven way, for instance, as a function of (feature) difference between two connected nodes.
\end{itemize}

To see the impact on local graph signal details in problem \eqref{eq.nvls}, consider a simple piecewise smooth graph signal with slowly varying nonzero values  in a node subset $\ccalV^\prime \subseteq \ccalV$ and zero on the remaining nodes in $\ccalV\backslash{\ccalV^\prime}$. We can then weight the nodes in $\ccalV^\prime$ with a common scalar, and another different scalar for the remaining nodes, since the signal is not globally smooth but locally in two separate parts. Problem \eqref{eq.nvls} will seek for a graph signal that is locally smooth on $\ccalV^\prime$ and on $\ccalV\backslash \ccalV^\prime$. Instead, problem \eqref{eq.regls} will weight every node with $\omega_0$, effectively looking for a graph signal that is smooth over all nodes and ignores the local detail. This simplified example shows how the NA regularization generalizes the NI case by considering a local smoothness penalty and relates to the piecewise-constant, -smooth notions. 
In the sequel, we design parameter $\bw$ to learn the local signal smoothness priors.

The optimal analytical solution for problem \eqref{eq.nvls} can be found by setting its gradient to zero, i.e.,
\begin{equation}\label{eq.xhat}
\mathbf{\hat{x}}(\boldsymbol{\omega}) = \left( \mathbf{I}+\mathbf{S}(\boldsymbol{\omega}) \right) ^{-1} \mathbf{y}:=\bbH(\bbomega)\mathbf{y}
\end{equation} 
where we have defined the NA filter\footnote{Here the definition of NA filter is different from the node variant graph filter in \cite{segarra2017}} $ \bbH(\bbomega) \triangleq 
\left( \mathbf{I}+\mathbf{S}(\boldsymbol{\omega}) \right)^{-1}$, which is PSD by definition. Despite the similarity with \eqref{eq.reglssol}, the optimal solution in \eqref{eq.xhat} is now governed by the vector of parameters $\bbomega$. This vector changes the bias-variance trade-off as we discuss in the next section.

\begin{remark}
The optimal regularized solutions \eqref{eq.reglssol} and \eqref{eq.xhat} can also be interpreted as graph filtering operations \cite{emerging_gsp}. In particular, while~\eqref{eq.reglssol} filters the measurements $\bby$ with an autoregressive graph filter with denominator coefficients $(1; \omega_0)$ which is common for all nodes \cite{arma}, the node-adaptive expression~\eqref{eq.xhat} filters $\bby$ with an autoregressive edge-varying filter with edge varying coefficients $(\bbI; \bbomega \bbomega^\top)$ \cite{advance}.
\end{remark}

\subsection{Bias-variance trade-off} \label{sec3.3}
Similar to \eqref{mse_w0}, the MSE for estimator \eqref{eq.xhat} has the form 
\begin{equation}\label{mse_w}
\begin{split}
\text{mse}(\boldsymbol{\omega}) 
& =  \lVert \b(\boldsymbol{\omega}) \lVert_2^2 + \text{var}(\bw)  \\
&= \tr (( \mathbf{I}-\Hw ) ^2\mathbf{x^*x}^{*\top})+ \tr(\Hw^2 \boldsymbol{\Sigma})\\
\end{split}	
\end{equation}
with bias 
\begin{equation}\label{bias_w}
\begin{split}
        \b(\boldsymbol{\omega}) = ( (\mathbf{I}+\diagw\mathbf{L}\diagw)^{-1}-\mathbf{I} ) \mathbf{x}^*
\end{split}
\end{equation}
and variance 
\begin{equation}\label{var_w}
\begin{split}
    \text{var}(\boldsymbol{\omega}) = \tr ( (\mathbf{I}+\diagw\mathbf{L}\diagw)^{-2}\boldsymbol{\Sigma} ).
\end{split}
\end{equation}
As it follows from \eqref{mse_w}-\eqref{var_w}, the bias-variance trade-off is now controlled by $\bbomega$. If all entries of $\bbomega$ are close to zero the bias is low and the MSE is governed by a high variance. If all entries of $\bbomega$ are far from zero, the bias is large and governs the MSE, achieving a small variance. However, the enhanced DoFs of the NA regularizer compared to the NI one allow us to identify an interval for $\bbomega$ that guarantees a smaller reconstruction variance while maintaining a lower MSE.

\begin{lemma}\label{le.res}
	Consider the node-invariant and the node-adaptive estimates $\mathbf{\hat{x}}(\omega_0)$ and $\xhatw$ as  in~\eqref{eq.reglssol} and~\eqref{eq.xhat}, respectively. Consider also the respective variances over all nodes $\text{var}(\omega_0)$ in~\eqref{var_w0} and $ \text{var}(\boldsymbol{\omega})$ in~\eqref{var_w}. If all node-adaptive weights $\bw = [\omega_1,\dots,\omega_N]^\top$ satisfy
	\begin{equation}\label{eq.condProp1}
	\omega_0 \le \omega_i^2, \quad \text{ for } i = 1,2,\dots,N
	\end{equation}
     then $\text{var}(\boldsymbol{\omega}) \leq \text{var}(\omega_0)$.
\end{lemma}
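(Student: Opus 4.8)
The plan is to compare the two variances by expressing both as traces of matrix functions and then using the standard operator-monotonicity fact that $\mathbf{A} \preceq \mathbf{B}$ (in the PSD order) implies $\tr(\mathbf{A}\boldsymbol{\Sigma}) \le \tr(\mathbf{B}\boldsymbol{\Sigma})$ for any PSD $\boldsymbol{\Sigma}$. Concretely, from \eqref{var_w0} and \eqref{var_w} it suffices to show that the condition \eqref{eq.condProp1} implies
\begin{equation*}
(\mathbf{I}+\diagw\mathbf{L}\diagw)^{-2} \preceq (\mathbf{I}+\omega_0\mathbf{L})^{-2},
\end{equation*}
after which multiplying by $\boldsymbol{\Sigma}$ and taking traces gives $\text{var}(\bw) \le \text{var}(\omega_0)$. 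Since $\mathbf{X}\mapsto \mathbf{X}^{-2}$ is operator-antitone on PSD matrices (it is the composition of $\mathbf{X}\mapsto\mathbf{X}^{-1}$, antitone, with $\mathbf{X}\mapsto\mathbf{X}^2$ — the latter is not operator monotone in general, so more care is needed; I would instead use that $\mathbf{X}\mapsto\mathbf{X}^{-1}$ is antitone and apply it twice, or note that $(\mathbf{I}+\mathbf{S}(\bw))^{-2}\preceq(\mathbf{I}+\omega_0\mathbf{L})^{-2}$ follows directly from $(\mathbf{I}+\omega_0\mathbf{L})\preceq(\mathbf{I}+\mathbf{S}(\bw))$ by the fact that $\mathbf{A}\preceq\mathbf{B}$ with both PSD and commuting—no, they need not commute—so the cleanest route is via simultaneous arguments below), it is enough to establish the single inequality
\begin{equation}\label{eq.keyPSD}
\omega_0 \mathbf{L} \preceq \mathbf{S}(\bw) = \diagw\,\mathbf{L}\,\diagw .
\end{equation}

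The key step is therefore \eqref{eq.keyPSD}. I would prove it directly from the quadratic-form characterization in \eqref{eq.naQform}. For any $\bx\in\reals^N$,
\begin{equation*}
\bx^\top \mathbf{S}(\bw)\bx \;=\; \tfrac12\sum_{i\in\mathcal{V}}\sum_{j\in\mathcal{N}_i} A_{ij}(\omega_i x_i-\omega_j x_j)^2,
\qquad
\omega_0\,\bx^\top\mathbf{L}\bx \;=\; \tfrac12\sum_{i\in\mathcal{V}}\sum_{j\in\mathcal{N}_i} \omega_0\,A_{ij}(x_i-x_j)^2 .
\end{equation*}
So it suffices to show, edge by edge, that $\omega_0(x_i-x_j)^2 \le (\omega_i x_i-\omega_j x_j)^2$ whenever $\omega_0\le\omega_i^2$ and $\omega_0\le\omega_j^2$. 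The hard part will be that this edge-wise inequality is simply \emph{false} for general signs and magnitudes (e.g. $\omega_i=\omega_j=\sqrt{\omega_0}$ but $x_i,x_j$ chosen so that $\sqrt{\omega_0}(x_i-x_j)$ has smaller magnitude than $(x_i-x_j)$ when $\omega_0<1$ — wait, then it holds; the genuine failure is when the $\omega$'s differ in a way that shrinks the weighted difference). This means the naive "edge-by-edge domination" does not work, and I expect the author's argument to instead exploit the specific structure $\mathbf{S}(\bw)=\diagw\,\mathbf{L}\,\diagw$ together with a change of variables.

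The route I would actually pursue: substitute $\bz = \diagw\,\bx$, equivalently $\bx=\operatorname{diag}(\bw)^{-1}\bz$ (assuming $\omega_i\neq0$; the degenerate case is handled separately). Then $\bx^\top\mathbf{S}(\bw)\bx = \bz^\top\mathbf{L}\bz$, while $\omega_0\,\bx^\top\mathbf{L}\bx = \omega_0\,\bz^\top \operatorname{diag}(\bw)^{-1}\mathbf{L}\operatorname{diag}(\bw)^{-1}\bz$, so \eqref{eq.keyPSD} becomes $\omega_0\,\operatorname{diag}(\bw)^{-1}\mathbf{L}\operatorname{diag}(\bw)^{-1}\preceq\mathbf{L}$, i.e. $\mathbf{L} - \omega_0\,\operatorname{diag}(\bw)^{-1}\mathbf{L}\operatorname{diag}(\bw)^{-1}\succeq0$. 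In quadratic-form terms this is $\sum A_{ij}\big[(z_i-z_j)^2 - \omega_0(z_i/\omega_i - z_j/\omega_j)^2\big]\ge0$. Hmm, this still has a sign subtlety per edge. Given the difficulty, I suspect the cleanest correct statement is the spectral one: write $\mathbf{S}(\bw)=\operatorname{diag}(\bw)\mathbf{L}\operatorname{diag}(\bw)$ and use that for PSD $\mathbf{L}=\mathbf{L}^{1/2}\mathbf{L}^{1/2}$, $\mathbf{S}(\bw)=(\mathbf{L}^{1/2}\operatorname{diag}(\bw))^\top(\mathbf{L}^{1/2}\operatorname{diag}(\bw))$ has the same nonzero eigenvalues as $\operatorname{diag}(\bw)\mathbf{L}\operatorname{diag}(\bw)$... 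The key obstacle is reconciling the per-node scalar condition \eqref{eq.condProp1} with a matrix inequality; I expect the author invokes that $\operatorname{diag}(\bw)^2 \succeq \omega_0\mathbf{I}$ (immediate from \eqref{eq.condProp1}) and then argues $\operatorname{diag}(\bw)\mathbf{L}\operatorname{diag}(\bw)\succeq\omega_0\mathbf{L}$ via $\mathbf{L}^{1/2}\operatorname{diag}(\bw)^2\mathbf{L}^{1/2}\succeq\omega_0\mathbf{L}^{1/2}\mathbf{L}^{1/2}=\omega_0\mathbf{L}$, which is valid because conjugation $\mathbf{M}\mapsto\mathbf{L}^{1/2}\mathbf{M}\mathbf{L}^{1/2}$ preserves the PSD order, and $\operatorname{diag}(\bw)\mathbf{L}\operatorname{diag}(\bw)$ and $\mathbf{L}^{1/2}\operatorname{diag}(\bw)^2\mathbf{L}^{1/2}$ share their spectrum (hence the trace-against-$\boldsymbol{\Sigma}$ comparison goes through once we also symmetrize appropriately). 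Finishing the proof then reduces to the operator-antitone inversion applied twice, and the degenerate $\omega_i=0$ case, which I would treat by a limiting argument or by noting $\mathbf{S}(\bw)$ still satisfies \eqref{eq.keyPSD} by continuity.
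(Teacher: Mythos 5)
Your plan follows essentially the same route as the paper's appendix proof (reduce everything to the Loewner comparison of the filters, i.e.\ to $\omega_0\mathbf{L}\preceq\mathbf{S}(\boldsymbol{\omega})$, equivalently $\mathbf{H}(\boldsymbol{\omega})\preceq\mathbf{H}(\omega_0)$), but the misgivings you voiced at the edge-wise step were well founded: that key matrix inequality is \emph{false} under \eqref{eq.condProp1} alone, so no patch can establish it. Concretely, take $N=2$ with a single unit-weight edge, $\omega_0=1$, $\boldsymbol{\omega}=(2,1)^\top$ (so $\omega_i^2\ge\omega_0$ holds). Then
$\mathbf{S}(\boldsymbol{\omega})-\omega_0\mathbf{L}=\begin{pmatrix}3 & -1\\ -1 & 0\end{pmatrix}$
is indefinite, and correspondingly $\mathbf{H}(\omega_0)-\mathbf{H}(\boldsymbol{\omega})=\tfrac{1}{6}\,\text{diag}(2,-1)$ is not PSD. (The paper's own proof asserts $\boldsymbol{\omega}\boldsymbol{\omega}^\top\succeq\omega_0\mathbf{1}\mathbf{1}^\top$, which the same numbers refute: a difference of two rank-one PSD matrices with dominating diagonal need not be PSD.) Your attempted rescue also cannot work as logic: $\mathbf{L}^{1/2}\text{diag}^2(\boldsymbol{\omega})\mathbf{L}^{1/2}\succeq\omega_0\mathbf{L}$ is true, and it shares its nonzero \emph{spectrum} with $\text{diag}(\boldsymbol{\omega})\mathbf{L}\,\text{diag}(\boldsymbol{\omega})$ (it is $M^\top M$ versus $MM^\top$), but equality of spectra does not transfer the Loewner order, nor does it justify a trace comparison against an arbitrary $\boldsymbol{\Sigma}$. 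Finally, even granting $\mathbf{H}(\boldsymbol{\omega})\preceq\mathbf{H}(\omega_0)$, your closing step ("operator-antitone inversion applied twice") does not deliver $\mathbf{H}^2(\boldsymbol{\omega})\preceq\mathbf{H}^2(\omega_0)$, because $t\mapsto t^2$ is not operator monotone; the paper bypasses this with a trace identity for symmetric matrices rather than operator monotonicity.

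In fact the difficulty is not a missing trick but the statement's generality: with the same example and $\boldsymbol{\Sigma}=\text{diag}(\epsilon,1)$ one gets $\text{var}(\boldsymbol{\omega})=(8\epsilon+29)/36$ versus $\text{var}(\omega_0)=(5\epsilon+20)/36$, so for small $\epsilon$ the node-adaptive variance is \emph{larger}, i.e.\ the conclusion fails for general (even positive definite) noise covariances. What your eigenvalue instinct does prove is the white-noise case $\boldsymbol{\Sigma}=\sigma^2\mathbf{I}$: by Ostrowski's congruence theorem, $\lambda_k\bigl(\text{diag}(\boldsymbol{\omega})\mathbf{L}\,\text{diag}(\boldsymbol{\omega})\bigr)\ge\min_i\omega_i^2\,\lambda_k(\mathbf{L})\ge\omega_0\lambda_k(\mathbf{L})$ for every $k$, hence $\operatorname{tr}\bigl(\mathbf{H}^2(\boldsymbol{\omega})\bigr)\le\operatorname{tr}\bigl(\mathbf{H}^2(\omega_0)\bigr)$; eigenvalue ordering suffices there and no Loewner comparison of the filters is needed. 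So to close the argument you must either restrict $\boldsymbol{\Sigma}$ (e.g.\ to a multiple of the identity) or strengthen the hypothesis on $\boldsymbol{\omega}$ (e.g.\ constant weights $\omega_i\equiv\omega$ with $\omega^2\ge\omega_0$, for which $\mathbf{S}(\boldsymbol{\omega})=\omega^2\mathbf{L}\succeq\omega_0\mathbf{L}$ and all matrices involved commute); as written, both your proposal and the argument it mirrors contain a step that a two-node graph already defeats.
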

\begin{IEEEproof}
	See Appendix \ref{appendix.B}.
\end{IEEEproof}

While a reduced variance is useful for signal recovery, it often comes at  expenses of an increased bias. To see how sensitive the changes in the two quantities are, we illustrate in \autoref{fig:prop1study} the bias, the variance, and the MSE for the NI and the NA regularizers. We see there exists a region for the NA weights where both the variance and the MSE of the NA regularizer are lower compared with those of the NI regularizer, while the bias is not increased significantly. The following theorem provides sufficient conditions on $\bbomega$ to identify this region. 

\begin{figure}
    \centering
    \includegraphics[width=1\linewidth]{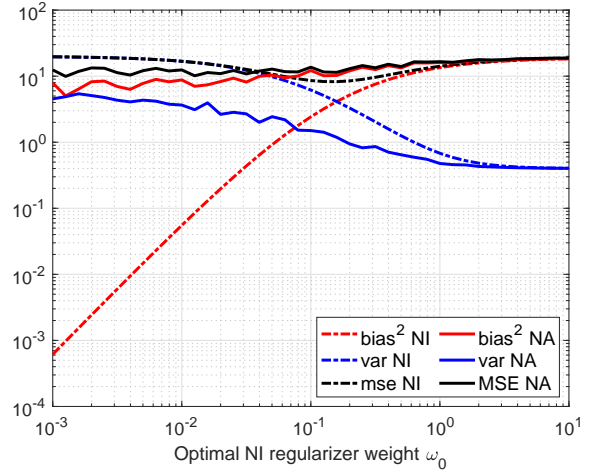}
    \caption{Bias-variance trade-off for recovering a graph signal with the NA and NI regularizer over an Erdos-Reny\`i graph, $\text{SNR}=0\text{ dB}$. The detailed settings are in Section~\ref{sec5.1}.  The node adaptive weights are chosen randomly to satisfy the result in Lemma~\ref{le.res}.}
    \label{fig:prop1study}
\end{figure}

\begin{theorem}\label{th.mse}
Consider the measurements $\mathbf{y} = \bbx^* + \bbn$ with desired graph signal $\bx^*$ and noise~$\mathbf{n}\sim\mathcal{N}(\mathbf{0},\boldsymbol{\Sigma})$. Let $\bbL$ be the graph Laplacian with maximum eigenvalue $\lambda_{\max}(\mathbf{L})$. Further, define a rank-one matrix $\mathbf{P}:=\mathbf{x^*x}^{*\top}\boldsymbol{\Sigma}^{-1}$ and let $\rho$ be its only non-zero eigenvalue. Define also a rank-one matrix $\boldsymbol{\Gamma}=\mathbf{P(I+P)}^{-1}$ and let $\gamma = \rho(1+\rho)^{-1} \in (0,1)$ be its only non-zero eigenvalue. Consider also the mean squared error of the node-invariant estimate $\mathbf{\hat{x}}(\omega_0)$ in~\eqref{eq.reglssol} and node-adaptive estimate $\mathbf{\hat{x}}(\boldsymbol{\omega})$ in~\eqref{eq.xhat}.
Then, if 
\begin{subequations}
\begin{align}
& \omega_0 \leq \omega_i^2, \text{ for } i = 1,2,\dots,N \label{eq.cond1}\\
& 2\gamma \leq \frac{1}{1+\omega_0\lambda_{\text{max}}(\mathbf{L})} + \frac{1}{1+\text{max}\left\lbrace \omega_i^2\right\rbrace \lambda_{\text{max}}(\mathbf{L})} \label{eq.cond2}
\end{align}
\end{subequations}
both the variance and the mean squared error of the node-adaptive regularizer are smaller than those of the node-invariant one; i.e., $\text{var}(\boldsymbol{\omega}) \leq \text{var}(\omega_0)$ and $\text{mse}(\boldsymbol{\omega}) \leq \text{mse}(\omega_0)$.
\end{theorem}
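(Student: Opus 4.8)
Here is how I would approach the proof.

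\emph{Step 1 — reduce to a bias-versus-variance balance.}
The variance statement needs no work: condition~\eqref{eq.cond1} is exactly the hypothesis of Lemma~\ref{le.res}, so $\text{var}(\boldsymbol{\omega})\le\text{var}(\omega_0)$ follows immediately from \eqref{var_w0}--\eqref{var_w}. For the MSE, I would start from \eqref{mse_w} and write
\begin{equation*}
\text{mse}(\boldsymbol{\omega})-\text{mse}(\omega_0)=\big(\lVert\mathbf{b}(\boldsymbol{\omega})\rVert_2^2-\lVert\mathbf{b}(\omega_0)\rVert_2^2\big)+\big(\text{var}(\boldsymbol{\omega})-\text{var}(\omega_0)\big),
\end{equation*}
so that, the second bracket being already nonpositive, the whole task is to show the \emph{bias increase} is dominated by the \emph{variance reduction}. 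Condition~\eqref{eq.cond2} is the quantitative device that makes this happen.

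\emph{Step 2 — spectral bounds on the two filters.}
Since $\mathbf{L}\preceq\lambda_{\max}(\mathbf{L})\mathbf{I}$, congruence by $\text{diag}(\boldsymbol{\omega})$ gives $\mathbf{S}(\boldsymbol{\omega})=\text{diag}(\boldsymbol{\omega})\mathbf{L}\,\text{diag}(\boldsymbol{\omega})\preceq\lambda_{\max}(\mathbf{L})\,\text{diag}(\boldsymbol{\omega})^2\preceq\max\{\omega_i^2\}\,\lambda_{\max}(\mathbf{L})\,\mathbf{I}$, while $\mathbf{S}(\boldsymbol{\omega})\succeq\mathbf{0}$ by Lemma~\ref{lemma4}. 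Hence every eigenvalue of $\mathbf{H}(\boldsymbol{\omega})=(\mathbf{I}+\mathbf{S}(\boldsymbol{\omega}))^{-1}$ lies in $\big[(1+\max\{\omega_i^2\}\lambda_{\max}(\mathbf{L}))^{-1},1\big]$ and every eigenvalue of $\mathbf{H}(\omega_0)$ lies in $\big[(1+\omega_0\lambda_{\max}(\mathbf{L}))^{-1},1\big]$; equivalently, using $\mathbf{I}-\mathbf{H}(\boldsymbol{\omega})=\mathbf{S}(\boldsymbol{\omega})(\mathbf{I}+\mathbf{S}(\boldsymbol{\omega}))^{-1}\succeq\mathbf{0}$, we control $\lVert\mathbf{I}-\mathbf{H}(\boldsymbol{\omega})\rVert_2\le 1-(1+\max\{\omega_i^2\}\lambda_{\max}(\mathbf{L}))^{-1}$ and the analogue for $\omega_0$. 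These two minimal filter gains are exactly the two terms on the right-hand side of \eqref{eq.cond2}.

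\emph{Step 3 — bring in the rank-one geometry.}
The quantities $\mathbf{P},\rho,\boldsymbol{\Gamma},\gamma$ appear because the signal enters $\text{mse}$ only through the rank-one $\mathbf{x}^*\mathbf{x}^{*\top}$. Setting $\mathbf{B}:=\boldsymbol{\Sigma}+\mathbf{x}^*\mathbf{x}^{*\top}$, Sherman--Morrison gives $\rho=\mathbf{x}^{*\top}\boldsymbol{\Sigma}^{-1}\mathbf{x}^*=\tr(\mathbf{P})$, $\mathbf{x}^{*\top}\mathbf{B}^{-1}\mathbf{x}^*=\gamma$, and $\boldsymbol{\Gamma}=\mathbf{x}^*\mathbf{x}^{*\top}\mathbf{B}^{-1}$ with $\boldsymbol{\Gamma}\mathbf{x}^*=\gamma\mathbf{x}^*$ and $\mathbf{B}\boldsymbol{\Gamma}^\top=\mathbf{x}^*\mathbf{x}^{*\top}$. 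Completing the square in \eqref{mse_w} then yields the identity
\begin{equation*}
\text{mse}(\mathbf{H})=\tr\!\big((\mathbf{H}-\boldsymbol{\Gamma})\mathbf{B}(\mathbf{H}-\boldsymbol{\Gamma})^\top\big)+(1-\gamma)\lVert\mathbf{x}^*\rVert_2^2 ,
\end{equation*}
valid for $\mathbf{H}=\mathbf{H}(\omega_0)$ and $\mathbf{H}=\mathbf{H}(\boldsymbol{\omega})$, so the $\boldsymbol{\omega}$-independent term cancels and it remains to compare the two $\mathbf{B}$-weighted residuals of the filters from $\boldsymbol{\Gamma}$. Splitting $\mathbf{B}=\boldsymbol{\Sigma}+\mathbf{x}^*\mathbf{x}^{*\top}$, the $\mathbf{x}^*\mathbf{x}^{*\top}$-part of a residual equals $\lVert(\mathbf{H}-\gamma\mathbf{I})\mathbf{x}^*\rVert_2^2$ and the $\boldsymbol{\Sigma}$-part recombines with the variance (already ordered by Lemma~\ref{le.res}); after collecting the $\mathbf{H}$-linear contributions (whose coefficient collapses to $2$ because $\tfrac{2}{1+\rho}+2\gamma=2$), the comparison reduces to a scalar inequality over the eigenvalues of $\mathbf{H}(\omega_0)$ and $\mathbf{H}(\boldsymbol{\omega})$ confined as in Step~2, whose worst case is nonpositive exactly when $2\gamma$ does not exceed the sum of the two minimal filter gains, i.e.\ under \eqref{eq.cond2}.

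\emph{The main obstacle.}
The delicate point is this last reduction. The filters $\mathbf{H}(\omega_0)$ and $\mathbf{H}(\boldsymbol{\omega})$ do not commute, and in fact $\mathbf{H}(\boldsymbol{\omega})^2\preceq\mathbf{H}(\omega_0)^2$ fails in general --- only the $\boldsymbol{\Sigma}$-weighted traces are ordered, by Lemma~\ref{le.res} --- so the bias comparison cannot be executed as a single positive-semidefinite inequality and must go through scalar spectral estimates. Making those estimates tight enough that the bias increase is genuinely absorbed by the variance decrease, rather than by a crude bound that would force a condition much stronger than \eqref{eq.cond2}, is precisely where the rank-one reduction and the exact constant in \eqref{eq.cond2} are needed.
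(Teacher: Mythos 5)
Your Step 1 and Step 2 are fine and match the paper: the variance claim is exactly Lemma~\ref{le.res}, and your congruence argument $\mathbf{S}(\boldsymbol{\omega})\preceq\max\{\omega_i^2\}\lambda_{\max}(\mathbf{L})\mathbf{I}$ gives the same two extreme filter gains that the paper obtains via Lemma~\ref{lemma4}. Your completing-the-square identity in Step 3 is also correct, and after expansion it is algebraically equivalent to the paper's factorization of the MSE gap: both routes reduce the theorem to showing
\begin{equation*}
\tr\bigl(\mathbf{B}\,(\mathbf{H}(\omega_0)-\mathbf{H}(\boldsymbol{\omega}))\,[\mathbf{H}(\boldsymbol{\omega})+\mathbf{H}(\omega_0)-2\boldsymbol{\Gamma}]\bigr)\;\geq\;0,
\qquad \mathbf{B}=\boldsymbol{\Sigma}+\mathbf{x}^*\mathbf{x}^{*\top}.
\end{equation*}
So up to this point you have reformulated, not proved, the statement.

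The genuine gap is the last step, which you only assert: ``the comparison reduces to a scalar inequality over the eigenvalues \dots whose worst case is nonpositive exactly when \eqref{eq.cond2} holds.'' No mechanism is given for turning the non-commuting matrix-trace comparison into that scalar inequality, and you yourself flag this as ``the main obstacle'' without resolving it. The paper supplies precisely the two missing ingredients: (i) condition \eqref{eq.cond1} gives the Loewner ordering $\mathbf{H}(\omega_0)-\mathbf{H}(\boldsymbol{\omega})\succeq\mathbf{0}$ (established in the proof of Lemma~\ref{le.res}, and stronger than the trace ordering you rely on), so that by Lemma~\ref{lemma_nsd} it suffices to show $\mathbf{H}(\boldsymbol{\omega})+\mathbf{H}(\omega_0)-2\boldsymbol{\Gamma}$ has nonnegative smallest eigenvalue; and (ii) Weyl's inequality (Theorem~\ref{theorem2}) gives $\lambda_{\min}\{\mathbf{H}(\boldsymbol{\omega})+\mathbf{H}(\omega_0)-2\boldsymbol{\Gamma}\}\geq\lambda_{\min}\{\mathbf{H}(\boldsymbol{\omega})\}+\lambda_{\min}\{\mathbf{H}(\omega_0)\}-2\gamma$, which combined with your Step 2 bounds is exactly where \eqref{eq.cond2} enters. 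Without (i) and (ii), nothing in your sketch justifies passing from the trace expression to the two minimal filter gains; note also that \eqref{eq.cond2} is only a sufficient condition, so the claim that the worst case fails ``exactly when'' it is violated overstates what is true and what is needed. Filling the gap amounts to reproducing the paper's Appendix C argument (or an equivalent symmetrized spectral bound) after your reformulation.
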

\begin{IEEEproof}
	See Appendix. \ref{appendix.C}.
\end{IEEEproof}

Condition~\eqref{eq.cond2} is easier satisfied when the eigenvalue $\gamma~\rightarrow~0$, i.e., the signal-to-noise ratio (SNR) is low, or $\rho \rightarrow 0$. This indicates that the NA is more powerful in harsher scenarios. In contrast, when  $\gamma \rightarrow 1$ and, thus $\rho \rightarrow \infty$, i.e., the SNR is high, the condition for the NA regularization to outperform NI one is hard to satisfy.

\begin{corollary}\label{cor1}
    Under the same settings of Theorem~\ref{th.mse}, the condition
	\begin{equation}\label{eq.corCond}
	    \text{max}\left\lbrace \omega_i^2 \right\rbrace  \leq (\rho\lambda_{\text{max}}(\mathbf{L}))^{-1}, \text{ for }i = 1,2,\dots,N 
	\end{equation}
	 guarantees condition \eqref{eq.cond2} is satisfied.
\end{corollary}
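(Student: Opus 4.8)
The plan is to show that \eqref{eq.corCond}, used together with condition \eqref{eq.cond1} of Theorem~\ref{th.mse} (which is still in force, since the corollary only replaces \eqref{eq.cond2} by a more convenient sufficient condition), forces \emph{each} of the two summands on the right-hand side of \eqref{eq.cond2} to be at least $\gamma$, so that their sum is at least $2\gamma$. The whole argument is a short monotonicity computation; the only thing that really needs care is where the definition of $\gamma$ and the hypothesis \eqref{eq.cond1} enter.

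First I would rewrite the defining relation $\gamma = \rho(1+\rho)^{-1}$ in the equivalent form $\gamma^{-1} = 1 + \rho^{-1}$, i.e. $(1-\gamma)/\gamma = \rho^{-1}$. This is the only place the definition of $\gamma$ is used, and it recasts the bound \eqref{eq.corCond} on $\max\{\omega_i^2\}$ directly in terms of $\gamma$. Concretely, from \eqref{eq.corCond} we get $\max\{\omega_i^2\}\,\lambda_{\max}(\mathbf{L}) \le \rho^{-1}$, hence $1 + \max\{\omega_i^2\}\,\lambda_{\max}(\mathbf{L}) \le 1 + \rho^{-1} = \gamma^{-1}$, and therefore
\[
\frac{1}{1+\max\{\omega_i^2\}\,\lambda_{\max}(\mathbf{L})} \;\ge\; \gamma .
\]

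Next I would control the remaining term using \eqref{eq.cond1}. Since $\omega_0 \le \omega_i^2$ for every $i$, in particular $\omega_0 \le \max\{\omega_i^2\}$; multiplying by $\lambda_{\max}(\mathbf{L}) \ge 0$ and chaining with the bound just obtained gives $\omega_0\,\lambda_{\max}(\mathbf{L}) \le \max\{\omega_i^2\}\,\lambda_{\max}(\mathbf{L}) \le \rho^{-1}$, so by the same monotonicity step $\bigl(1+\omega_0\,\lambda_{\max}(\mathbf{L})\bigr)^{-1} \ge \gamma$. Adding the two inequalities yields exactly \eqref{eq.cond2}, and since Theorem~\ref{th.mse} then applies, the claimed variance and MSE comparisons follow.

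I do not expect a genuine obstacle here: the result is essentially $\tfrac{1}{1+a}+\tfrac{1}{1+b}\ge 2\gamma$ whenever $a,b \le \rho^{-1}$ with $a = \omega_0\lambda_{\max}(\mathbf{L})$, $b = \max\{\omega_i^2\}\lambda_{\max}(\mathbf{L})$. The one subtlety worth stating explicitly is that \eqref{eq.corCond} by itself only bounds $\max\{\omega_i^2\}$, so the hypothesis \eqref{eq.cond1} is genuinely needed to also bound the $\omega_0$-term from below — otherwise an arbitrarily large $\omega_0$ could make $\bigl(1+\omega_0\lambda_{\max}(\mathbf{L})\bigr)^{-1}$ negligible and \eqref{eq.cond2} could fail despite \eqref{eq.corCond} holding. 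Accordingly, I would phrase the corollary's conclusion as: \eqref{eq.corCond} together with \eqref{eq.cond1} implies \eqref{eq.cond2}, which is the intended reading within the scope of Theorem~\ref{th.mse}.
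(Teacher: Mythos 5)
Your proof is correct and follows essentially the same route as the paper's: rewrite $\gamma = 1/(1+\rho^{-1})$ so that \eqref{eq.corCond} makes each of the two summands in \eqref{eq.cond2} at least $\gamma$, with the $\omega_0$-term handled via $\omega_0 \le \max\{\omega_i^2\}$ from \eqref{eq.cond1}. Your explicit remark that \eqref{eq.cond1} is genuinely needed for the $\omega_0$-term is exactly the step the paper phrases as ``condition \eqref{eq.57b} can be omitted since $\max\{\omega_i^2\}\ge\omega_0$.''
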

\begin{IEEEproof}
	See Appendix. \ref{appendix.D}.
\end{IEEEproof} 

Corollary~\ref{cor1} provides an easier condition for the NA parameters $\bbomega$ compared to Theorem~\ref{th.mse}. In addition, condition~\eqref{eq.corCond} provides a clearer link between $\bbomega$ and the SNR. If $\rho \to 0$, hence $\gamma \to 0$, the noise is high and the upper bound for $\omega_i^2$ increases, implying a larger weight on local smoothness is needed. If $\rho \rightarrow \infty$, hence $\gamma \to 1$, the noise vanishes and the upper bound for $\omega_i^2$ goes to zero, implying the regularization has little effect. This performance analysis the NA regularizer compared to the NI one is supported by our numerical findings in Section~\ref{sec5}.

\subsection{Implementation}

The matrix inversion in  optimal estimate $\hat{\mathbf{x}}(\boldsymbol\omega)$ [cf.~\eqref{eq.xhat}] makes the node-adaptive regularizer  challenging  to be implemented on large graphs or in a distributed manner. Fortunately, in both cases the inverse can be approximated with a cost linear w.r.t. the number of graph edges, i.e., $\ccalO(M)$, by leveraging the graph filtering equivalence of \eqref{eq.xhat}; see, e.g.,~\cite{arma,filter_arma,advance}. The key to such linear cost lies in the sparsity of the parametric shift operator $\bbS(\bbomega)$, which coincides with the sparsity of the graph [cf. \eqref{eq:s}]. Because of this sparsity, the graph signal shifting operation $\bx^{(1)} = \Sw \bx = (\bbomega\bbomega^\top \odot \bbL) \bx $ has a cost of order linear in the number of edges $M$. Moreover, this operation is local over the graph and the $i$-th signal value is given by
\begin{equation}\label{eq.sparseImpl}
     x^{(1)}_i = \omega_i \sum_{j\in\mathcal{N}_i} A_{ij}(\omega_i x_i - \omega_j x_j). 
\end{equation}
By exploring \eqref{eq.sparseImpl}, we detail next how the NA filter can be implemented using the  conjugate gradient method \cite{cg_method} and distributed graph filters \cite{elvinphd,arma}.

\smallskip\noindent\textbf{Centralized.} To implement \eqref{eq.xhat} efficiently, we first rephrase it as a linear system
\def\xhatw{\mathbf{\hat{x}}(\boldsymbol{\omega})}
\begin{equation}\label{central_imp}
   (\mathbf{I}+\Sw) \xhatw = \mathbf{y}
\end{equation}
and then employ conjugate gradient \cite{cg_method} to obtain $\hat{\mathbf{x}}(\boldsymbol\omega)$. For completeness,  Algorithm~\ref{alg-CG} summarizes the required steps. Exploiting the local operation \eqref{eq.sparseImpl} in Steps 7 and 9 of Algorithm~\ref{alg-CG}, which are the main sources of computing exhausts, and running the conjugate gradient method for $T$ iterations, we have a cost of order $\mathcal{O}(TM)$.

\def\xhat{\mathbf{\hat{x}}}
\begin{algorithm}[!t]
	\caption{Conjugate gradient method for solving~\eqref{central_imp}} 
	\begin{algorithmic}[1] \label{alg-CG}
		\STATE \textbf{Input:} $\xhat_{(0)}$, node-adaptive regularizer weights $\bw$, accuracy $\epsilon$, number of iterations $T$\\
		\STATE \textbf{Initialization:} 
		\STATE  $\Sw =  \diagw\mathbf{L\diagw}$ \\
		\STATE  $\mathbf{b}_{(0)}  = \mathbf{r}_{(0)} = \mathbf{y} - (\mathbf{I}+\Sw)\xhat_{(0)}$ \\
		\STATE  $d_{(0)} = d_{new}  = \mathbf{r}_{(0)}^\top\mathbf{r}_{(0)}$\\
		\STATE \textbf{while} {$\tau<T$ and $d_{new} > \epsilon^2 d_{(0)}$}	\\
		\STATE $c_{(\tau)} = \frac{ d_{new} }{ \mathbf{b}_{(\tau)}^{\top} (\mathbf{I}+\Sw)  \mathbf{b}_{(\tau)} }$ \\ 
		\STATE $\xhat_{(\tau+1)} = \xhat_{(\tau)} +  c_{(\tau)}\mathbf{b}_{\tau}$ \\
		\STATE $\mathbf{r}_{(\tau+1)} =  \mathbf{r}_{(\tau)} - c_{(\tau)} (\mathbf{I}+\Sw)  \mathbf{b}_{(\tau)}$  \\ 
		\STATE $d_{old} = d_{new}, \, d_{new} = \mathbf{r}_{(\tau+1)}^{\top}\mathbf{r}_{(\tau+1)}$ \\
		\STATE $\mathbf{b}_{(\tau+1)} = \mathbf{r}_{(\tau+1)} + \frac{d_{new}}{d_{old}} \mathbf{b}_{(\tau)}$ \\
		\STATE $\tau = \tau+1$  \\
		\STATE \textbf{Output:}  $\xhatw = \xhat_{(\tau+1)}$\\
	\end{algorithmic}
\end{algorithm}

\smallskip\noindent\textbf{Distributed.}
To implement \eqref{eq.xhat} distributively with graph filters, we start with a random initialization $\mathbf{\hat{x}}_0$ for estimate $\hbx(\omega)$. At iteration $\tau$, the distributed estimate follows the recursion
\begin{equation}\label{eq.dis_imp1}
\mathbf{\hat{x}}_\tau(\boldsymbol{\omega}) = -\Sw\mathbf{\hat{x}}_{\tau-1}(\boldsymbol{\omega}) + \mathbf{y}.
\end{equation}
From~\eqref{eq.sparseImpl}, we can see the term $\Sw\mathbf{\hat{x}}_{\tau-1}(\boldsymbol{\omega})$ implies nodes communicate with their neighbors and exchange information about the previous estimate $\mathbf{\hat{x}}_{\tau-1}(\boldsymbol{\omega})$, which has a cost of order $\ccalO(M)$.

When $\bbomega$ satisfies the spectral norm inequality $\|\Sw\| < 1$, recursion \eqref{eq.dis_imp1} leads to the steady-state ($\tau \to \infty$) estimate
\begin{equation}
	\begin{split}
	\mathbf{\hat{x}}(\boldsymbol{\omega}) \triangleq \lim\limits_{\tau\rightarrow\infty}\mathbf{\hat{x}}_\tau(\boldsymbol{\omega}) = \sum_{\tau=0}^{\infty}(-\Sw)^\tau\mathbf{y} = (\mathbf{I}+\Sw)^{-1}\mathbf{y},
	\end{split}
\end{equation}
which matches the optimal solution~\eqref{eq.xhat}. Arresting, therefore, \eqref{eq.dis_imp1} in $T$ iterations leads again to a distributed communication and computational cost of order $\ccalO(TM)$.

The sparsity of parametric shift operator $\bbS(\bw)$ leads to a reduced implementation complexity for both centralized and distributed ways, which is critical in practice. In the following, we investigate how to optimally design the NA weights so to obtain the optimal performance. 

\section{Weight design} \label{sec4}
\def\x*{\mathbf{x}^*}
\def\minmax{\textit{min-max }}
\def\supp{\text{supp}}
\def\bW{\boldsymbol{\Omega}}
In this section, we focus on designing the NA weights $\bw$ in a minimum MSE (MMSE) sense. The estimation error between the optimal estimate $\xhatw$ [cf.~\eqref{eq.xhat}] and the true value $\x*$ is
\begin{equation}\label{ori_err}
    \mathbf{e}\triangleq \xhatw - \x*.
\end{equation}
We can then formulate the optimal design of $\bw$ as solving the  optimization problem 
\begin{mini}|s|
{\bw\in\mathbb{R}^N}{\mathbb{E} \left\lbrace \lVert \xhatw -  \mathbf{x}^* \lVert_2^2 \right\rbrace, }
{\label{eq.opto}}{} 
\end{mini}
where $\hat{\bbx}$ is defined in \eqref{eq.xhat}.
The inverse relation in $\xhatw$ renders problem~\eqref{eq.opto} challenging to solve in its original form. To overcome this challenge, we first propose two relaxation methods to solve \eqref{eq.opto} for $\bbomega$ leveraging strong (approximate) prior knowledge of the graph signal. Then, we use the \minmax strategy to adapt these two methods to scenarios where only signal bounds are available.

\subsection{Prony's method}\label{sec4.1}

Given the estimate $\xhatw = (\mathbf{I}+\Sw)^{-1}\mathbf{y}$ and the error $\bbe$ in \eqref{ori_err}, a typical approach to design parameters in inverse relationships is to consider Prony's modified error~\cite{hayes}
\begin{equation}\label{mod_err}
\mathbf{e}^\prime \triangleq \mathbf{y}- ( \mathbf{I}+\Sw ) \mathbf{x}^*,
\end{equation}
which is obtained by multiplying both sides of~\eqref{ori_err} by $(\bbI+\Sw)$. Albeit not equivalent to the true error $\bbe$, minimizing the modified error $\mathbf{e}^\prime$ is easier due to the linear relationship in $\bbomega$ and the resulting performance is often satisfactory \cite{hayes}. The NA weights that minimize the error $\bbe^\prime$ in \eqref{mod_err} can be obtained by solving the following problem
\begin{mini}|s|
{\bw\in\mathbb{R}^N}{\mathbb{E} \left\lbrace \lVert \mathbf{y} - ( \mathbf{I}+\Sw ) \mathbf{x}^* \lVert_2^2 \right\rbrace}
{}{}
\addConstraint{\omega_0^* \leq \omega_i^2,\,\text{for } i = 1,\dots,N}\label{eq.probEnProny}
\end{mini}
where the constraint imposes  all entries of $\bbomega$ to satisfy Lemma~\ref{le.res}. It would be more natural to also add the condition on a smaller MSE [cf. \eqref{eq.cond2} and \eqref{eq.corCond}] as a constraint. However, this condition depends on the SNR, i.e., when the SNR is low, it is easier for them to be satisfied; it leads to a smaller set of feasible solutions. Thus, the complication of this condition does not encourage us to add it in the optimization problem. Instead, we observe that Lemma \ref{le.res} already improves the optimal weight design in practice.

The quadratic relation in the optimization variable $\bbomega$ in $\Sw$ [cf. \eqref{eq:s}] makes problem \eqref{eq.probEnProny} non-convex. To obtain $\bbomega$, we follow a two step approach. First we define a positive semi-definite rank-one matrix $\boldsymbol{\Omega}\triangleq\boldsymbol{\omega\omega}^\top$ and solve \eqref{eq.probEnProny} w.r.t the new variable $\boldsymbol{\Omega}$. Then, we find a vector estimate $\hat{\bbomega}$ by performing a rank-one approximation of the obtained matrix~\cite{QCQPrelax}.

Rewriting \eqref{eq.probEnProny} w.r.t. the new variable $\bbOmega$, we obtain
\begin{mini}|s|
{\bW\in\mathcal{S}_+^{N\times N}}{\tr\left\lbrace (\boldsymbol{\Omega}\odot\mathbf{L})^2\mathbf{x}^*\mathbf{x}^{*\top} \right\rbrace}
{\label{eq.prony_relaxed}}{}
\addConstraint{\omega_0^*\leq\Omega_{ii},\,\text{for } i = 1,\dots,N}
\end{mini}
where the derivation of the cost function is reported in Appendix~\ref{appendix.E}. Note that in \eqref{eq.prony_relaxed}, we followed \cite{QCQPrelax} and dropped the non-convex constraint $\text{rank}(\boldsymbol{\Omega})=1$.
Then, problem \eqref{eq.prony_relaxed} becomes a convex semi-definite program (SDP), solvable with off-the-shelf tools \cite{cvx,cvx_theory}, and for instance with interior-point method \cite{vandenberghe1996,helmberg1996}, it has a cost\footnote{We also refer to the recent SDP work \cite{jiang2020} to provide a complete complexity analysis.} of $\ccalO(N^3)$ per iteration. We also observe that the returned solution from~\eqref{eq.prony_relaxed} is consistently a rank-one matrix.
Given then $\bW^*$ from~\eqref{eq.prony_relaxed}, the node-adaptive weight vector $\bw^*$ is equal to the eigenvector of $\bW^*$ with the largest eigenvalue, multiplied with the square root of the eigenvalue. However, more sophisticated rank-one approximations are also possible \cite{QCQPrelax}.

\subsection{Semi-definite relaxation}\label{sec4.2}

Despite its simplicity, Prony's method does not directly relate to the true error $\mathbf{e}$ in~\eqref{ori_err}. Working with the modified error might be viable when the signal-to-noise ratio (SNR) is high but it might lead to a degraded performance when the SNR is low. 
To overcome the latter, we propose here an optimization problem relying on semi-definite relaxation when minimizing the true error in~\eqref{eq.opto}. This approach follows again a two-step procedure: first, we formulate \eqref{eq.opto} w.r.t. the matrix variable $\bbH(\bW) = (\bbI+\bW\odot\bbL)^{-1}$ where $\bW := \bw\bw^\top$, and then we obtain $\bW$ from $\bbH(\bW)$ by means of inversion. Finally, $\bw^*$ is extracted from $\bW^*$ by rank-one approximation \cite{QCQPrelax}.

To start, let us recall the node-adaptive estimate $\xhatw~=~(\mathbf{I}+\bw\bw^\top\odot\bbL)^{-1}\mathbf{y}$ and rewrite~\eqref{eq.opto} w.r.t. $\bbH(\bW)$ as
\begin{mini}|s|
{\bW,\bbH(\bW)\in\mathcal{S}_+^{N\times N}}{\mathbb{E} \left\lbrace \lVert\mathbf{H}(\bW)\mathbf{y} - \mathbf{x}^* \lVert_2^2 \right\rbrace }
{\label{eq.sdp1}}{}
\addConstraint{\mathbf{H}(\bW) = \left( \mathbf{I}+\bW\odot\mathbf{L} \right)^{-1}}
\addConstraint{ \text{rank}(\boldsymbol{\Omega}) = 1}
\addConstraint{\omega_0^*\leq \Omega_{ii},\,\text{for } i = 1,\dots,N.}
\end{mini}
where the last constraint is again Lemma \ref{le.res}. The cost function in~\eqref{eq.sdp1} can be expanded further as (cf. Appendix. \ref{appendix.F})
\begin{align} \label{eq.sdp_cost}
    \mathbb{E} \left\lbrace \lVert\mathbf{H}(\bW)\mathbf{y} - \mathbf{x}^* \lVert_2^2 \right\rbrace &= \tr\{ (\bbH^2(\bW))\\
    &-2\bbH(\bW)+\bbI)\bbx^*\bbx^{*\top} + \bbH^2(\bW)\bbSigma \}\nonumber
\end{align}
which depends on both the signal and noise covariance. Problem~\eqref{eq.sdp1} presents two non-convex constraints: the inverse relationship $\mathbf{H}(\bW) = \left( \mathbf{I}+\bW \odot \bbL \right)^{-1}$ and the rank-one constraint $\text{rank}(\boldsymbol{\Omega}) = 1$. We address the former, by leveraging \emph{semi-definite relaxation} and by writing the constraint in its positive semi-definite convex form~\cite{QCQPrelax}
\begin{equation}\label{mid_step_sdp}
   \mathbf{H}(\bW)  -\left( \mathbf{I}+ \bW \odot \bbL  \right)^{-1} \succeq \mathbf{0}.
\end{equation}
Since $\bbL$, $\bW$, and $\bbI + \bW \odot \bbL$ are positive semi-definite matrices, we can use the Schur complement (see Appendix.\ref{appendix.A}) to reformulate~\eqref{mid_step_sdp} into a convex linear matrix inequality
$$
\begin{pmatrix} 
\mathbf{I}+\bW \odot \bbL & \mathbf{I} \\
\mathbf{I} & \mathbf{H}(\bW) 
\end{pmatrix} \succeq \mathbf{0}.
$$

Regarding the non-convexity of the rank-one constraint, we can relax it as in~\eqref{eq.prony_relaxed}. With these relaxation techniques, we rephrase problem~\eqref{eq.sdp1} into the convex form
\begin{mini}|s|
{\bW,\bbH(\bW)\in\mathcal{S}_+^{N\times N}}{\mathbb{E} \left\lbrace \lVert\mathbf{H}(\bW) \mathbf{y} - \mathbf{x}^* \lVert_2^2 \right\rbrace}
{\label{eq.sdp2}}{}
\addConstraint{\begin{pmatrix} 
\mathbf{I}+ \bW\odot\mathbf{L} & \mathbf{I} \\
\mathbf{I} & \mathbf{H}(\bW) 
\end{pmatrix} \succeq \mathbf{0}}
\addConstraint{\omega_0^*\leq \Omega_{ii},\,\text{for } i = 1,\dots,N}.
\end{mini}
This is also an SDP problem but with more constraints compared to \eqref{eq.prony_relaxed}. After obtaining  $\bbH^*(\bW)$ by solving~\eqref{eq.sdp2}, we can find $\bW$ from the inverse relation below
$$\bbH^*(\bW)(\bbI+\bW\odot\bbL) = (\bbI+\bW\odot\bbL)\bbH^*(\bW) = \bbI$$ by solving the following least squares problem
\begin{mini}|s|
{\bW\in\mathcal{S}_+^{N\times N}}{  \lVert \bbH^*(\bW)(\bbI+\bW\odot\bbL) - \bbI \lVert^2_2 \hspace{5mm} \\
\quad \quad \quad \quad \quad + \quad  \lVert (\bbI+\bW\odot\bbL)\bbH^*(\bW) - \bbI \lVert^2_2.}
{\label{eq.sdp3} }{}
\end{mini}

Given then matrix $\bW^*$ from \eqref{eq.sdp3}, we extract $\bw^*$ by a rank-one approximation \cite{QCQPrelax}. With the obtained NA weights $\bbomega^*$, we subsequently build the filter $\bbH(\bw^*) = (\bbI+\bw^*\bw^{*\top}\odot\bbL)^{-1}$ and obtain the estimate $\hbx(\bw^*)$ as in~\eqref{eq.xhat}. 

The main advantage of the semidefinite relaxation (SDR) approach \eqref{eq.sdp2}-\eqref{eq.sdp3} is that it focuses directly on the true error \eqref{ori_err} rather than the modified error \eqref{mod_err}. However, due to the complexity issue, semidefinite relaxation methods are applicable to medium-sized graphs with up to a thousand  nodes.

\subsection{Min-Max Adaptation}\label{sec4.3}

As it follows from \eqref{eq.probEnProny} and \eqref{eq.sdp1}, both  Prony's method and the SDR method require knowledge of  signal $\bx^*$ to design weights $\bw$. This is possible in a data-driven fashion under the condition that the test and training data have a similar distribution.
In this section, we depart from this assumption and propose a design method that is independent of $\bx^*$ but only requires side information such as signal evolution bounds. The latter is simpler to acquire from a small set of data or by physical considerations.

Consider signal $\bbx^*$ has an evolution bounded in the interval $[ \mathbf{x}_{\rm l},\mathbf{x}_{\rm u} ]$. We can then design the parameter vector $\bw$ for estimator $\hbx(\bw)$ [cf. \eqref{eq.xhat}] as the one that minimizes the MSE of the worst-case scenario, i.e.,

\begin{mini}|s|
{\bw}{\underset{\mathbf{x}^*}{\max}\;\mathbb{E}\left\lbrace \lVert \mathbf{\mathbf{\hat{x}(\boldsymbol{\omega})} -x^*} \lVert_2^2 \right\rbrace}
{\label{eq:minmax1}}{}
\addConstraint{\mathbf{x}_{\rm l} \leq \mathbf{x}^* \leq \mathbf{x}_{\rm u}}
\addConstraint{\omega_0^*\leq \omega_{i}^2,\,\text{for } i = 1,\dots,N.}
\end{mini}
Problems of the form \eqref{eq:minmax1} are known as \minmax  problems. The inner maximization seeks for the signal $\bx^*$ that leads to the worst MSE performance, while the outer minimization finds the parameter $\bw$ that minimizes the worst MSE among all possible choices. 
Different from problems \eqref{eq.probEnProny} and \eqref{eq.sdp1}, signal $\bbx^*$ is now an optimization variable in \eqref{eq:minmax1} and only $\mathbf{x}_{\rm l}$ and $\mathbf{x}_{\rm u}$ are needed. There are efficient methods to solve \minmax problems of the form \eqref{eq:minmax1} such as iterative first-order methods~\cite{minmax} or gradient descent-ascent\cite{gradient_ascent_des}. In the sequel, we detail how \eqref{eq:minmax1} specializes to Prony's method and SDR method.

Following the same rationale as in~\eqref{eq.prony_relaxed}, we can write the \minmax Prony's method as
\begin{mini}|s|
{\bW\in\mathcal{S}^{N\times N}_{+}}{\underset{\bx^*}{\max}\;  \tr\{(\bW\odot\mathbf{L})^2\mathbf{x^*x}^{*\top} \} }
{\label{eq:minmax2}}{}
\addConstraint{\mathbf{x}_{\rm l} \leq \mathbf{x}^* \leq \mathbf{x}_{\rm u}}
\addConstraint{\omega_0^*\leq \Omega_{ii},\,\text{for } i = 1,\dots,N.}
\end{mini}
Since $\bW\odot\bbL$ is positive semidefinite, the cost function is quadratic (convex) over the inner optimization variable $\bbx^*$. Further, it can be shown that the maximizer of the inner problem is at the boundaries, i.e., either $\bx^*=\mathbf{x}_{\rm l}$ or  $\bx^*=\mathbf{x}_{\rm u}$. 
To solve the outer minimization problem w.r.t. $\bW$, we proceed similarly as in \eqref{eq.prony_relaxed}. For the SDR problem \eqref{eq.sdp2} we can also solve the min-max version since the cost function for the latter is also convex in $\bbx^*$. Here again, we optimize the true error [cf. \eqref{ori_err}] at the price of a higher computational complexity.

\section{Numerical results} \label{sec5}

In this section, we compare the performance of the proposed design schemes with state-of-the-art alternatives and illustrate the different trade-offs inherent to NA regularization on synthetic and real-world data from the Molene\footnote{Raw data available at \newline \url{https://donneespubliques. meteofrance.fr/donnees libres/Hackathon/RADOMEH.tar.gz}} and the NOAA\footnote{Raw data available at \newline \url{https://www.ncdc.noaa.gov/data-access/land-based-station-data/land-based-datasets/climate-normals/1981-2010-normals-data}} data sets. We  measure the recovery accuracy between the estimate $\hbx$ and the true signal $\bbx^*$ through the normalized mean squared error, $\text{NMSE} = \lVert\mathbf{\hat{x}-x^*}\lVert_2^2/\lVert\mathbf{x^*}\lVert_2^2$. In these simulations, we used the GSP \cite{gsptoolbox} and CVX \cite{cvx} toolboxes.

\subsection{Synthetic data} \label{sec5.1}
\def\dB{\text{ dB}}
\begin{figure}[t] \vskip-0.5cm
    \centering
    \includegraphics[width=1\linewidth]{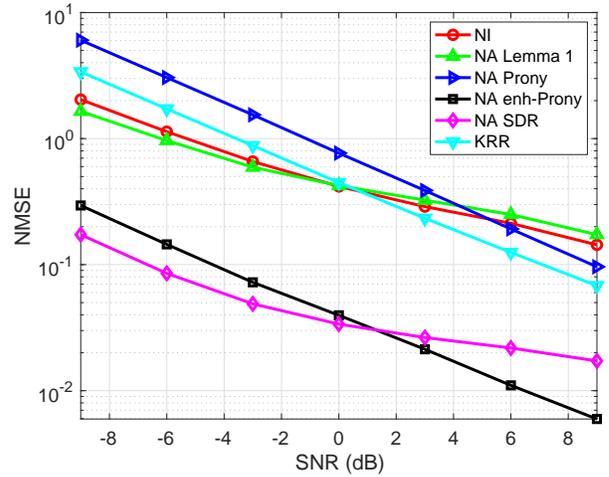}
    \caption{NMSE of different methods as a function of the SNR over an Erdos-Reny\`i graph. The true graph signal is bandlimited to the first $20$ graph frequencies.}
    \label{fig:er}
\end{figure}

In the first set of experiments, we consider synthetic Erdos-Reny\`i graphs of $N = 50$ nodes and link formation probability $0.5$. We generate a synthetic graph signal $\bbx^*$, whose graph Fourier transform is one in the first $20$ coefficients and zero elsewhere. This signal is called a bandlimited graph signal and varies smoothly over the graph \cite{emerging_gsp}; hence, it fits the assumption for Tikhonov regularization \eqref{eq.regls}. We corrupted the signal with a zero-mean Gaussian noise with variance $\sigma_n^2$, so to obtain a signal-to-noise ratio $\text{SNR} = \lVert\mathbf{x^*}\lVert_2^2/(N\sigma_n^2)$. We average the performance over $50$ graphs and $100$ noise realizations leading to a total of $5000$ Monte-Carlo runs. This scenario is also the one used  in \autoref{fig:prop1study}, where we showed the NA regularizer can achieve both a lower variance and MSE.

We first evaluate Prony's method when the true signal is known and the SDR method when both the true signal and noise variance are known. Our rationale is to avoid biases induced by a training set or by focusing solely on the worst-case scenario. We address the latter in the subsequent section with real data. The specific approaches we consider are:
\begin{enumerate}[label={\roman*)},]
\item The benchmark node-invariant regularizer with optimal weight $\omega_0^*=\mathcal{O}(\sqrt{\frac{\theta}{\lambda_2\lambda_N}})$, where $\theta = \sqrt{\frac{1}{\text{SNR}}}$, and $\lambda_2$, $\lambda_N$ are the smallest and the largest non-zero eigenvalues of the graph Laplacian $\mathbf{L}$, respectively \cite{biasvar}.
\item The naive node-adaptive regularization where the NA weights $\bw$ are chosen randomly to satisfy Lemma~\ref{le.res}; i.e., $\omega_i = \sqrt{\omega_0^*}+\omega_0^*\cdot c_i$ for $i=1,\dots,N$, where $c_i$ is uniformly distributed in $[0,1]$.
\item Prony's node-adaptive design, where we do not enforce the constraint of Lemma~\ref{le.res} in problem \eqref{eq.prony_relaxed}.
\item Prony's method [cf. \eqref{eq.prony_relaxed}].
\item The SDR node-adaptive design [cf. \eqref{eq.sdp2}, \eqref{eq.sdp3}].
\item The diffusion kernel ridge regression (KRR) with parameters $\sigma^2_{\text{KRR}} = 1$ and $\mu_{\text{KRR}} = 10^{-4}$, chosen to achieve the best performance \cite{mul_ker2}.
\end{enumerate}
%
\begin{figure}[t] \vskip-0.5cm
    \centering
    \includegraphics[width=1\linewidth]{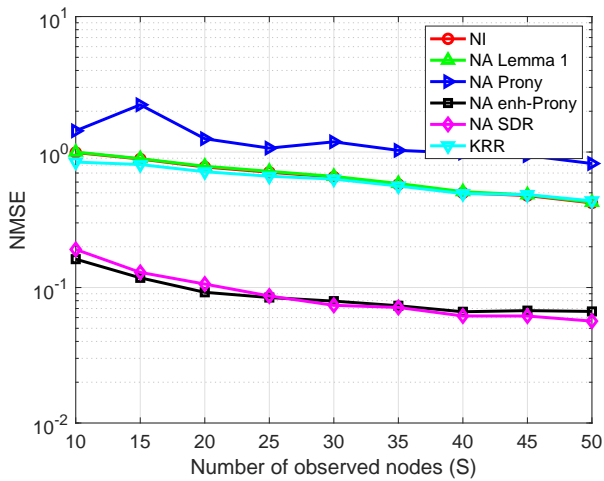}
    \caption{Interpolation performance of the different methods as a function of the number of observed nodes with $\SNR = 0 \text{ dB}$ over an Erdos-Reny\`i graph. The true graph signal is bandlimited in the first $20$ graph frequencies.}
    \label{fig:interp_synth}
\end{figure}

\autoref{fig:er} shows the NMSE recovery performance of the considered methods with respect to the SNR. First, we observe the proposed Prony's method and the SDR reduce the NMSE by one order of magnitude compared with the optimal NI approach and the KRR. As we anticipated in \autoref{fig:prop1study}, even the naive random NA regularizer achieves a comparable performance with these competitive alternatives, ultimately, highlighting the potential of the NA regularizer for graph signal recovery. We also remark the importance of the theoretical result in Lemma~\ref{le.res} in the Prony's problem \eqref{eq.prony_relaxed}, which reduces the MSE by one order of magnitude. We attribute the latter to the fact that Prony's approaches focus on the modified error rather than the true one. This is also evidenced by the comparison with the SDR technique, where Prony's method has a worse NMSE for lower SNRs; i.e., where considering the true error is more effective to deal with the large noise. 

Next, we evaluate the NMSE performance of six different methods for interpolating missing values. We consider noisy observations from the random subset $\ccalM \in \{10,15,20,\dots,50\}$ with an SNR of $0\dB$ to show the robustness of NA regularization. From \autoref{fig:interp_synth}, we observe again the superior performance of Prony's method and the SDR method. On the contrary, the naive weight setting and Prony's unconstrained method offer a similar performance as the benchmark NI regularization. The latter highlights the improvement brought by Lemma \ref{le.res}.

In the sequel, we will conduct experiments over real-world measurements. 
We omit the results of the NA design based on the ground truth signal since they behave the same as for the synthetic data. Instead, we consider the earlier mentioned data-driven and \minmax scenarios.

\subsection{Molene data set}\label{subsec_molene}
\begin{figure*}[h!]\vskip-2.5cm
  \subfloat[Data-driven.]{
	\begin{minipage}[c][1\width]{
	   0.48\textwidth}
	   \centering
	   \includegraphics[width=1\textwidth,]{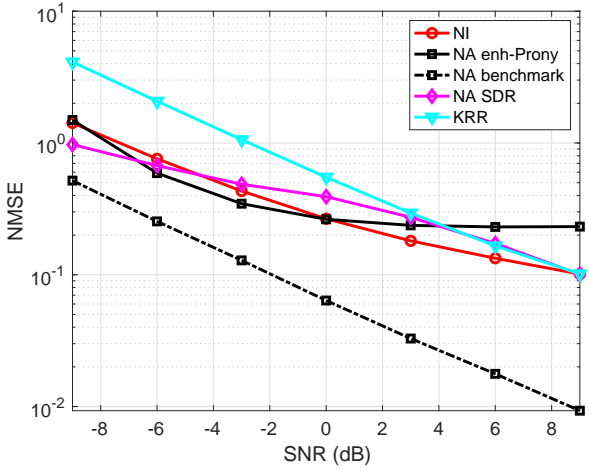}\vskip-1.5cm
	   \label{fig:molene_training}
	\end{minipage}}
 \hfill	
  \subfloat[Min-max.]{
	\begin{minipage}[c][1\width]{
	   0.48\textwidth}
	   \centering
	   \includegraphics[width=1\textwidth]{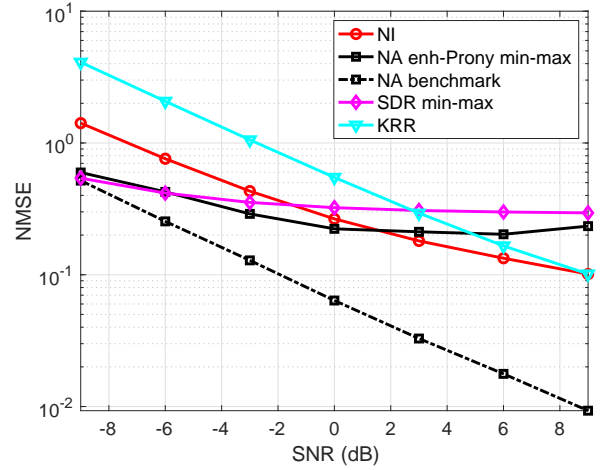}\vskip-1.5cm
	   \label{fig:molene_minimax}
	\end{minipage}}
\caption{NMSE denoising performance of different methods w.r.t SNRs in the Molene data set. (a) Data-driven scenario. Half of the data are used to estimate the node-adaptive parameters. (b) Min-max scenario based on signal evolution bounds. }
\label{fig:molene_main}
\end{figure*}

\begin{figure*}[h!]\vskip-2.5cm
  \subfloat[Data-driven.]{
	\begin{minipage}[c][1\width]{
	   0.48\textwidth}
	   \centering
	   \includegraphics[width=1\textwidth]{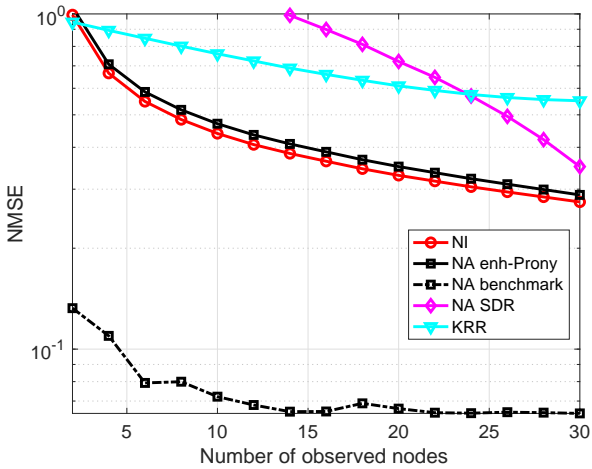}\vskip-1.5cm
	   \label{fig:molene_interp_b}
	\end{minipage}}
 \hfill 	
  \subfloat[Min-max.]{
	\begin{minipage}[c][1\width]{
	   0.48\textwidth}
	   \centering
	   \includegraphics[width=1\textwidth]{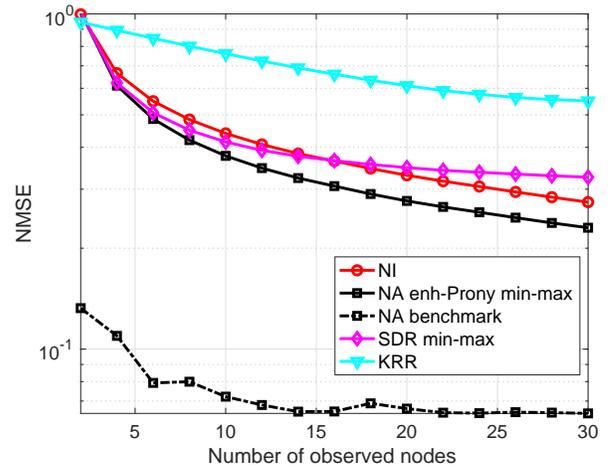}\vskip-1.5cm
	   \label{fig:molene_interp_c}
	\end{minipage}}
 \hfill	
\caption{NMSE interpolation performance of different methods w.r.t the number of sampled nodes on Molene data set. The signal-to-noise ratio is $\text{SNR} = 0 \text{ dB}$. (a) Data-driven scenario. Half of the data are used to estimate the signal covariance matrix. (b) Min-max scenario based on signal evolution bounds.  }
\label{fig:molene_interp}
\end{figure*}
The Molene data set comprises $T = 744$ hourly temperature measurements collected in January $2014$ from $N = 32$ weather stations in the region of Brest, France. We treat each weather station as a node of a graph and build a geometric distance graph in which each node is connected to its five nearest neighbours. The weight of edge $(i,j)$ is $A_{ij} = \text{exp}\{-5 d^2(i,j)\}$, where $d(i,j)$ is the Euclidean distance between stations $i$ and $j$. After removing the mean temperature across space and time, we can view every temporal snapshot as a graph signal. Among the six approaches listed in the former section, we omit the naive Lemma \ref{le.res} based approach as well as Prony's unconstrained method to avoid overcrowded plots since their performance trend is similar to that observed with synthetic data. For the KRR, we tune $\sigma_\text{KRR}^2 = 5$ to reach its best performance. The performance measure is still the NMSE averaged over all the $744$ graph signals. The collected measurements are assumed to be the true signal and we artificially add noise as before. We consider $50$ noise realizations per signal.

To show how different methods behave in the ideal and in a more practical setting, we considered two scenarios. 
First, we considered a data-driven scenario, where half of the temporal data are used to learn the NA weights. Specifically, we use these data to compute an average of $\bbx^*\bbx^{*\top}$ to be  used in Prony's design [cf. \eqref{eq.prony_relaxed}] and the SDR problem [cf.\eqref{eq.sdp2}]. The remaining half graph signals are used for testing. 
As the objective function of~\eqref{eq.sdp2} depends on the noise covariance, for a fixed SNR, the noise level for each signal is different. Thus, we need to design NA weights based on \eqref{eq.sdp2} for each training signal. To avoid doing so, we use one instance from the whole recordings to compute the noise covariance matrix needed in the SDR method, which degrades its performance. 
Second, the min-max method is tested where only the signal lower and upper bounds are known. The bounds are obtained as the lowest and highest temperature records in the dataset. For both scenarios, we consider the NMSE denoising performance as a function of the SNR.

From \autoref{fig:molene_main}, we observe that for more practical scenarios, the performance of the NA approaches degrades by approximately one order of magnitude, where the benchmark is done based on the oracle design \eqref{eq.prony_relaxed} with true signal available. However, they still outperform the KRR counterpart. Specifically, from the results in Figure\autoref{fig:molene_training}, we observe that Prony's method degrades substantially and achieves an NMSE similar to the NI regularizer. Though in the large SNR regime, it gets worse than the NI regularizer, this is because when noise is negligible, the training strategy will result in a suboptimal approximation of $\bx^*\bx^{*\top}$, and thus, suboptimal NA weights in problem \eqref{eq.prony_relaxed}.

Figure\autoref{fig:molene_minimax} shows  the NA algorithms still perform better than competing alternatives in the low SNR regime although the only information is the signal variation range. We attribute the saturation in the high SNR regime to the lack of information needed for designing the NA weights; i.e., the 
NA regularizer will impose a stronger bias on the solution that is not needed to denoise the signal in the high SNR regime. 

We then evaluate the interpolation performance of different methods. We collect noisy observations at  nodes $\ccalM \in \{2,4,6,\dots,28,30\}$  and $\text{SNR} = 0 \dB$. The results are shown in \autoref{fig:molene_interp}.
For the data-driven case in Figure\autoref{fig:molene_interp_b},  Prony's method degrades significantly with a performance worse than the NI regularizer. This could be due to over-fitting the training data. On the other hand, the SDR is not obtaining satisfactory results unless the observations are collected from all of the nodes. This is because the obtained estimate of $\bbx^*\bbx^{*\top}$ does not apply on the SDR method for the interpolation case. In the \minmax scenario in Figure\autoref{fig:molene_interp_c}, we only make use of the signal bounds and obtain a consistently better performance by Prony's method, while the SDR method behaves similar to the NI regularizer.

The performance of the NA regularizer on the Molene data set is in general not surprisingly good. This is because the temperature data is collected geographically in a small region, which inherently leads to a set of smooth signal measurements over the graph. Next, we will test the performance of the NA regularizer over a different data set which is collected from a very large geographical region. The global smoothness then is not intrinsically guaranteed, therefore a superior performance of the NA regularizer is expected. 

\subsection{NOAA data set}
\begin{figure*}[t]\vskip-2.5cm
  \subfloat[Denoising.]{
	\begin{minipage}[c][1\width]{
	   0.48\textwidth}
	   \centering
	   \includegraphics[width=1\textwidth]{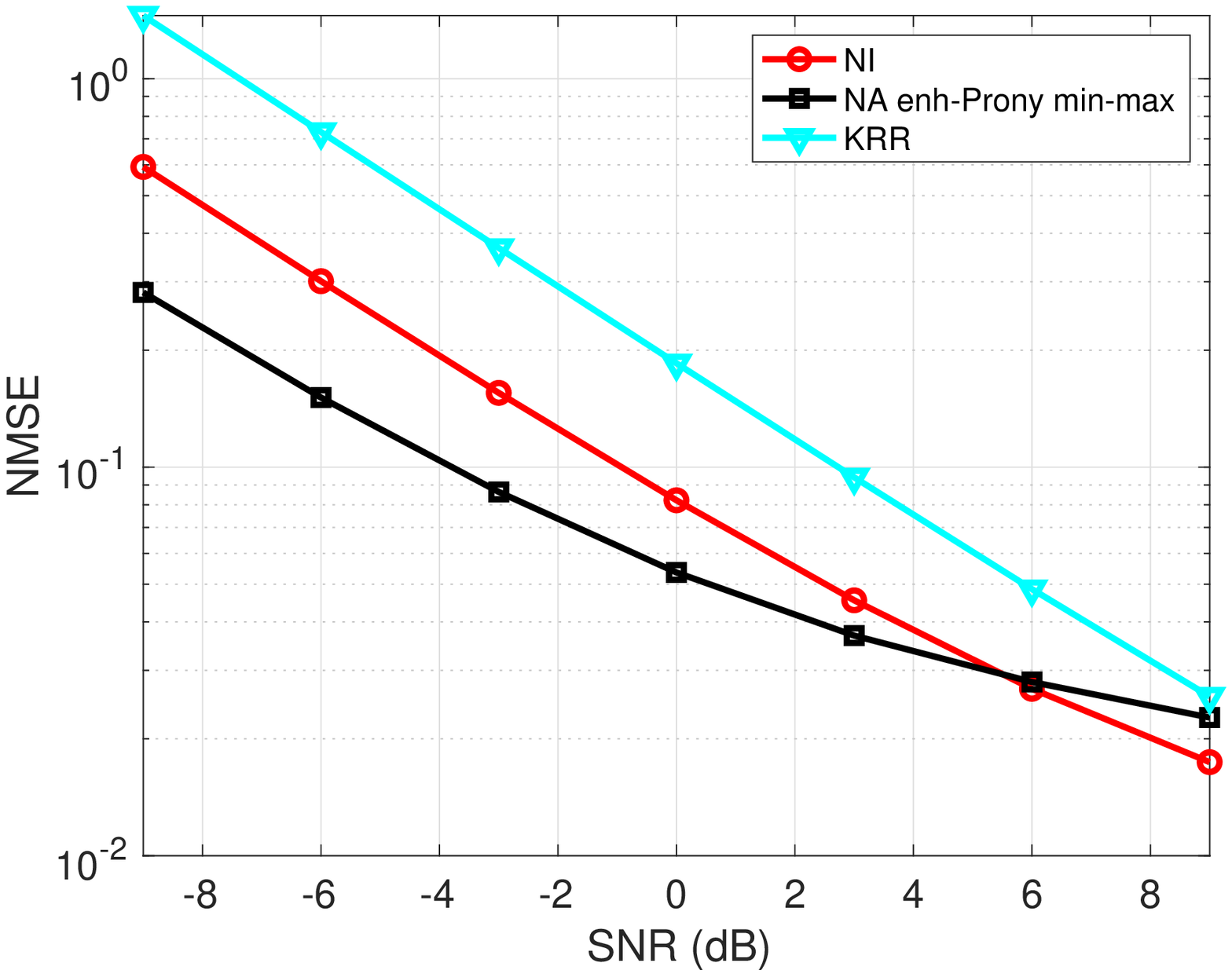}\vskip-1.5cm
	   \label{fig:us_tem_a}
	\end{minipage}}
 \hfill
  \subfloat[Interpolation.]{
	\begin{minipage}[c][1\width]{
	   0.48\textwidth}
	   \centering
	   \includegraphics[width=1\textwidth]{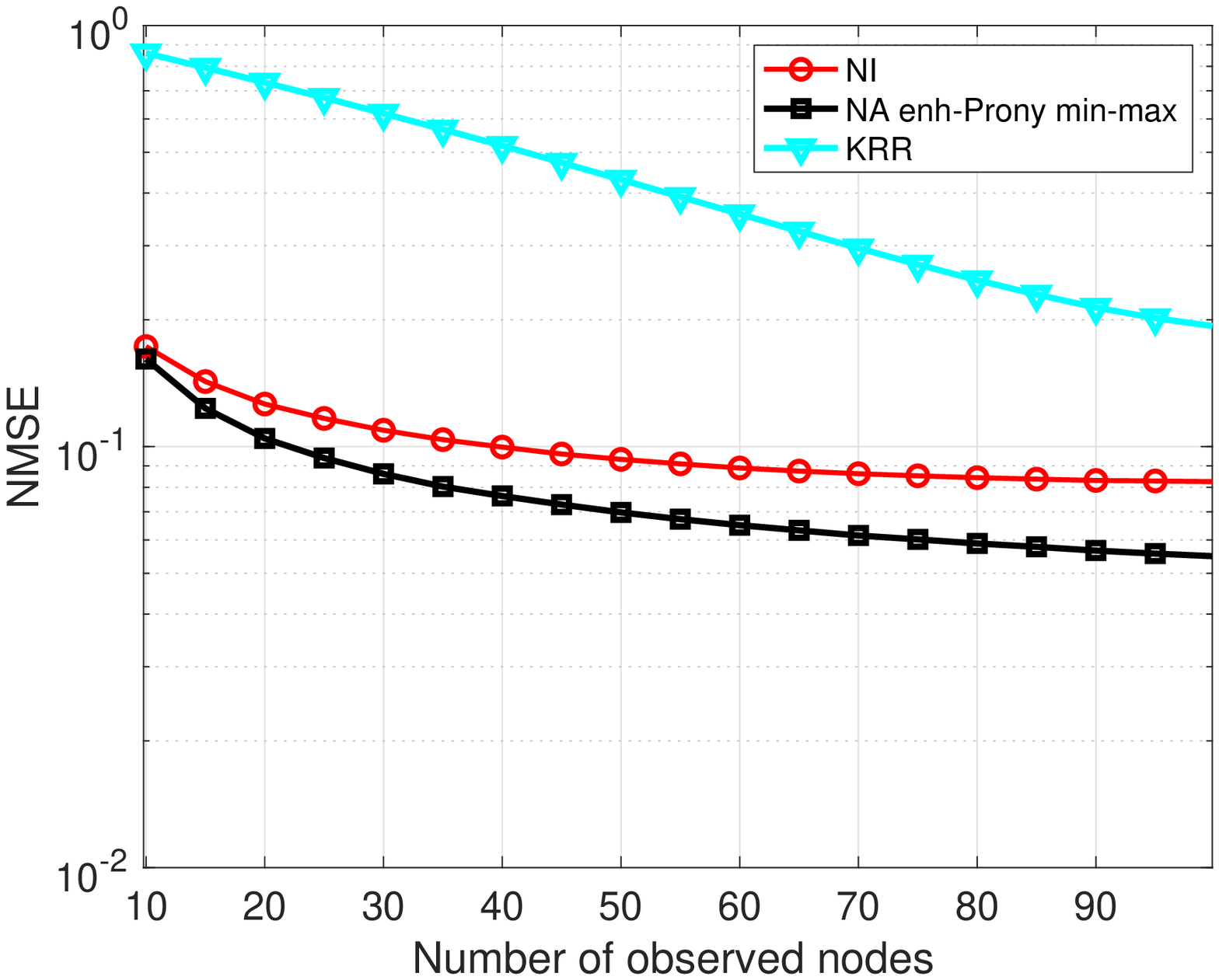}\vskip-1.5cm
	   \label{fig:us_tem_b}
	\end{minipage}}
 \hfill	
\caption{(a) Denoising performance of the enhanced Prony's method w.r.t. SNRs with different methods in \minmax scenario. (b) Interpolation performance w.r.t. SNRs the number of observed nodes, $\SNR=0\dB$}
\label{fig:us_tem}
\end{figure*}

The NOAA data set comprises $T = 8759$ hourly temperature measurements collected across the continental U.S. from $N = 109$ weather stations in $2010$. It is collected from a much larger geographical region compared to the Molene weather data. Following \cite{spg_noaa_data}, we treat each station as a node of a seven nearest neighbor graph based on geographical distances. We measure again the NMSE performance averaged over all the $8759$ signals and $50$ noise realizations per signal. The parameters of all methods are the same as in the former section.

\autoref{fig:us_tem} shows the results for the min-max method applied to denoising and interpolation. From Figure\autoref{fig:us_tem_a}, we observe the improved performance of the proposed NA regularizer over other alternatives. In specific, we see a $3\text{ dB}$ SNR improvement for a fixed NMSE in Prony's method, which is more accentuated at low SNRs. This shows the benefits of the NA regularizer in situations with a high noise level compared with the NI regularizer.

In the interpolation setting, the observed nodes are $\ccalM \in \{10,15,20,\dots,90,95\}$.
From Figure\autoref{fig:us_tem_b}, we observe that the NA regularizer achieves consistently a smaller NMSEs compared with the NI regularizer and KRR method. This improved performance gets more noticeable when the number of observed nodes is larger since the node-adaptive regularizer can better exploit the local signal behavior to find the missing values. In turn, this indicates that when the true signal is not available but only the signal variation bounds are known, i.e., the maximal and minimal signal observations, the NA regularizer can still perform well based on the \minmax strategy.

In practical situations, with proper prior information on the graph signals, we recommend to design the NA regularizer using the \minmax strategy, since it does not require the true signal or training data, but only the signal measurement bounds. From the above results, we see that in harsher situations with a high level of noise present, based on the \minmax strategy, the NA regularizer will behave better than the NI and KRR counterparts in graph signal reconstruction. But when the SNR is high or the noise level is low, we can see that the NI approach is reasonably good. In addition, when the graph signal is globally smooth, such as the Molene weather data, the NA regularizer would behave similar to the NI one. However, if the signal does not have global smoothness, e.g., the NOAA data, the reconstruction ability of the NA regularizer is shown to be much stronger than the NI one.

\section{Conclusions} \label{sec6}
This paper proposed a node-adaptive regularizer for graph signal reconstruction by enhancing the degrees of freedom of the regularization and generalizing the typical global signal smoothness prior. We considered the popular Tikhonov regularizer (also known as graph Laplacian regularizer), and proposed a node-adaptive extension of it. Instead of adopting a shared weight for all nodes, we assign each node a weight to deal with local signal smoothness and possible node heterogeneity. This strategy allows the signal priors to be adapted locally by tuning the regularization parameters, instead of assuming a global smoothness prior. In addition, we evaluated the related bias-variance trade-off, and showed its potential in achieving a lower variance and MSE. Furthermore, we formalized optimal design of the node weights based on Prony's method and semi-definite relaxation. We also proposed a \minmax formulation to deal with situations when no training data is available but only signal side information such as evolution bounds. Finally, numerical results on both synthetic and different type real data were performed to corroborate our findings. 

Meanwhile, in our parallel work \cite{yang2020}, we proposed a first-order iterative method to optimally design the NA weights based on a \minmax game where an energy bound is assumed on the signal measurements. In the future, we expect to further extend the node-adaptive regularization to other regularizers, and aim to reduce the complexity of optimal weight design.

\begin{appendices}
\section{Important lemmas and theorems} \label{appendix.A}
\begin{lemma}
    \textit{Schur complement lemma}: Given any symmetric matrix, 
$
\mathbf{M}=\begin{pmatrix} 
\mathbf{A} & \mathbf{B} \\
\mathbf{B}^\top & \mathbf{C} 
\end{pmatrix}
$, the following conditions are equivalent:
\begin{itemize}
	\item $\mathbf{M}\succeq \bb0$ ($\mathbf{M}$ is positive semi-definite);
	\item $\mathbf{A}\succeq \bb0,\quad (\mathbf{I-AA}^{\dagger})\mathbf{B}=\bb0, \quad \mathbf{C}-\mathbf{B}^\top\mathbf{A}^{\dagger}\mathbf{B}\succeq \bb0$
	\item $\mathbf{C}\succeq \bb0,\quad (\mathbf{I-CC}^{\dagger})\mathbf{B}= \bb0, \quad \mathbf{A}-\mathbf{B}^\top\mathbf{C}^{\dagger}\mathbf{B}\succeq \bb0$
\end{itemize}
\end{lemma}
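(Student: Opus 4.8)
The plan is to prove the equivalence of the first and second bullets; the equivalence with the third then follows for free by applying that result to the block-permuted matrix $\begin{pmatrix}\mathbf{C}&\mathbf{B}^\top\\\mathbf{B}&\mathbf{A}\end{pmatrix}$, which is congruent to $\mathbf{M}$ via the permutation swapping the two block-rows/columns and is therefore positive semi-definite exactly when $\mathbf{M}$ is. Throughout I would use the Moore--Penrose identities $\mathbf{A}\mathbf{A}^\dagger\mathbf{A}=\mathbf{A}$, $\mathbf{A}^\dagger\mathbf{A}\mathbf{A}^\dagger=\mathbf{A}^\dagger$, the symmetry of $\mathbf{A}^\dagger$ when $\mathbf{A}$ is symmetric, and the fact that for symmetric PSD $\mathbf{A}$ the matrix $\mathbf{A}\mathbf{A}^\dagger$ is the orthogonal projector onto $\mathrm{range}(\mathbf{A})$ (so $\mathbf{I}-\mathbf{A}\mathbf{A}^\dagger$ projects onto $\mathrm{null}(\mathbf{A})$). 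In particular, the condition $(\mathbf{I}-\mathbf{A}\mathbf{A}^\dagger)\mathbf{B}=\bb0$ is just $\mathrm{range}(\mathbf{B})\subseteq\mathrm{range}(\mathbf{A})$, equivalently $\mathbf{A}\mathbf{A}^\dagger\mathbf{B}=\mathbf{B}$.

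For the forward direction ($\mathbf{M}\succeq\bb0$ implies the three conditions), I would test the quadratic form $(\mathbf{x}^\top,\mathbf{y}^\top)\,\mathbf{M}\,(\mathbf{x}^\top,\mathbf{y}^\top)^\top=\mathbf{x}^\top\mathbf{A}\mathbf{x}+2\mathbf{x}^\top\mathbf{B}\mathbf{y}+\mathbf{y}^\top\mathbf{C}\mathbf{y}$ on well-chosen vectors. Taking $\mathbf{y}=\bb0$ gives $\mathbf{A}\succeq\bb0$. For the range condition, suppose $\mathbf{w}:=(\mathbf{I}-\mathbf{A}\mathbf{A}^\dagger)\mathbf{B}\mathbf{y}\neq\bb0$ for some $\mathbf{y}$; then $\mathbf{A}\mathbf{w}=\bb0$ and $\mathbf{w}^\top\mathbf{B}\mathbf{y}=\lVert\mathbf{w}\rVert_2^2$, so evaluating the form on $(s\mathbf{w}^\top,\mathbf{y}^\top)^\top$ yields $2s\lVert\mathbf{w}\rVert_2^2+\mathbf{y}^\top\mathbf{C}\mathbf{y}\to-\infty$ as $s\to-\infty$, contradicting $\mathbf{M}\succeq\bb0$; hence $(\mathbf{I}-\mathbf{A}\mathbf{A}^\dagger)\mathbf{B}=\bb0$. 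Finally, substituting $\mathbf{x}=-\mathbf{A}^\dagger\mathbf{B}\mathbf{y}$ and simplifying with $\mathbf{A}^\dagger\mathbf{A}\mathbf{A}^\dagger=\mathbf{A}^\dagger$ and $\mathbf{A}\mathbf{A}^\dagger\mathbf{B}=\mathbf{B}$ collapses the form to $\mathbf{y}^\top(\mathbf{C}-\mathbf{B}^\top\mathbf{A}^\dagger\mathbf{B})\mathbf{y}\ge0$, i.e.\ the generalized Schur complement is PSD.

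For the converse (the three conditions imply $\mathbf{M}\succeq\bb0$), I would use a pseudoinverse completion of the square: using $\mathbf{A}\mathbf{A}^\dagger\mathbf{B}=\mathbf{B}$ and $\mathbf{A}^\dagger\mathbf{A}\mathbf{A}^\dagger=\mathbf{A}^\dagger$ one verifies the identity
$$
\begin{pmatrix}\mathbf{x}\\\mathbf{y}\end{pmatrix}^{\top}\mathbf{M}\begin{pmatrix}\mathbf{x}\\\mathbf{y}\end{pmatrix}
=(\mathbf{x}+\mathbf{A}^\dagger\mathbf{B}\mathbf{y})^\top\mathbf{A}(\mathbf{x}+\mathbf{A}^\dagger\mathbf{B}\mathbf{y})
+\mathbf{y}^\top(\mathbf{C}-\mathbf{B}^\top\mathbf{A}^\dagger\mathbf{B})\mathbf{y},
$$
and both summands are nonnegative by $\mathbf{A}\succeq\bb0$ and $\mathbf{C}-\mathbf{B}^\top\mathbf{A}^\dagger\mathbf{B}\succeq\bb0$; since $(\mathbf{x},\mathbf{y})$ was arbitrary, $\mathbf{M}\succeq\bb0$. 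Equivalently this is the congruence $\mathbf{M}=\mathbf{T}^\top\,\mathrm{diag}\!\big(\mathbf{A},\,\mathbf{C}-\mathbf{B}^\top\mathbf{A}^\dagger\mathbf{B}\big)\,\mathbf{T}$ with $\mathbf{T}=\begin{pmatrix}\mathbf{I}&\mathbf{A}^\dagger\mathbf{B}\\\bb0&\mathbf{I}\end{pmatrix}$.

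The main obstacle is the bookkeeping with $\mathbf{A}^\dagger$ in the rank-deficient case: the cross terms in the completion-of-the-square identity vanish precisely because of $\mathbf{A}\mathbf{A}^\dagger\mathbf{B}=\mathbf{B}$ (the range condition), and not merely because $\mathbf{A}\succeq\bb0$ — indeed, dropping that condition can make $\mathbf{M}$ fail to be PSD even when $\mathbf{C}-\mathbf{B}^\top\mathbf{A}^\dagger\mathbf{B}\succeq\bb0$, which is exactly why the statement carries the middle condition. For the use made of this lemma in the paper, where $\mathbf{A}=\mathbf{I}+\bW\odot\mathbf{L}$ is invertible, the range condition is automatic and $\mathbf{A}^\dagger=\mathbf{A}^{-1}$, so the argument reduces to the classical nonsingular-$\mathbf{A}$ Schur complement.
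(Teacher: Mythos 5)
Your argument is correct. Note, however, that the paper does not actually prove this lemma: it is stated in Appendix~A as a standard fact (the generalized Schur complement criterion, as found in standard references), so there is no in-paper proof to compare against; what you have supplied is the classical argument in full. Your route is the textbook one: the forward direction by testing the quadratic form (with $\mathbf{y}=\mathbf{0}$ for $\mathbf{A}\succeq\mathbf{0}$, the scaling argument $s\to-\infty$ for the range condition $(\mathbf{I}-\mathbf{A}\mathbf{A}^{\dagger})\mathbf{B}=\mathbf{0}$, and the substitution $\mathbf{x}=-\mathbf{A}^{\dagger}\mathbf{B}\mathbf{y}$ for the complement), and the converse via the pseudoinverse completion of the square, i.e.\ the congruence
\begin{equation*}
\mathbf{M}=\mathbf{T}^{\top}
\begin{pmatrix}\mathbf{A} & \mathbf{0}\\ \mathbf{0} & \mathbf{C}-\mathbf{B}^{\top}\mathbf{A}^{\dagger}\mathbf{B}\end{pmatrix}
\mathbf{T},\qquad
\mathbf{T}=\begin{pmatrix}\mathbf{I} & \mathbf{A}^{\dagger}\mathbf{B}\\ \mathbf{0} & \mathbf{I}\end{pmatrix},
\end{equation*}
whose validity rests exactly on $\mathbf{A}\mathbf{A}^{\dagger}\mathbf{B}=\mathbf{B}$, as you correctly emphasize; the third bullet then follows by the block-permutation congruence. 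Your closing remark is also the operative point for the paper: in the application in Section~IV-B the $(1,1)$ block $\mathbf{I}+\boldsymbol{\Omega}\odot\mathbf{L}$ is positive definite, so the range condition is automatic, $\mathbf{A}^{\dagger}=\mathbf{A}^{-1}$, and only the nonsingular form of the lemma is ever invoked.
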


\begin{lemma} \label{le.trace_equality}
Let $\mathbf{A},\mathbf{B},\mathbf{C}$ be $n\times n$ symmetric matrices, then
\begin{equation}
\begin{aligned}
    & \tr\left( (\mathbf{A}^2-\mathbf{B}^2)\mathbf{C} \right) \\ = & \tr\left( (\mathbf{A}-\mathbf{B})(\mathbf{A}+\mathbf{B})\mathbf{C} \right)-\tr\left((\mathbf{AB-BA})\mathbf{C} \right) \\
     = & \tr\left( (\mathbf{A}-\mathbf{B})(\mathbf{A}+\mathbf{B})\mathbf{C} \right)\\
\end{aligned}
\end{equation}
because $ \tr(\mathbf{ABC}) = \tr\left( (\mathbf{ABC})^\top\right) = \tr(\mathbf{CBA}) = \tr(\mathbf{ACB}). $\footnote{Due to the cyclic property of trace and $\tr(\mathbf{A}) = \tr(\mathbf{A}^\top)$}
\end{lemma}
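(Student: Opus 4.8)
The plan is to prove both equalities by elementary manipulations: a distributive expansion of the matrix product for the first equality, and the transpose- and cyclic-invariance of the trace for the second. First I would expand the product appearing on the right-hand side. Since matrix multiplication distributes over addition,
\[
(\mathbf{A}-\mathbf{B})(\mathbf{A}+\mathbf{B}) = \mathbf{A}^2 + \mathbf{AB} - \mathbf{BA} - \mathbf{B}^2 = (\mathbf{A}^2-\mathbf{B}^2) + (\mathbf{AB}-\mathbf{BA}).
\]
Right-multiplying by $\mathbf{C}$ and invoking linearity of the trace gives
\[
\tr\left((\mathbf{A}-\mathbf{B})(\mathbf{A}+\mathbf{B})\mathbf{C}\right) = \tr\left((\mathbf{A}^2-\mathbf{B}^2)\mathbf{C}\right) + \tr\left((\mathbf{AB}-\mathbf{BA})\mathbf{C}\right),
\]
which, after moving the commutator term to the other side, is exactly the first claimed equality.

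For the second equality it suffices to show that the commutator term vanishes, i.e.\ that $\tr\left((\mathbf{AB}-\mathbf{BA})\mathbf{C}\right) = 0$. Here I would use that the trace is unchanged under transposition, together with the symmetry of $\mathbf{A}$, $\mathbf{B}$ and $\mathbf{C}$:
\[
\tr(\mathbf{ABC}) = \tr\left((\mathbf{ABC})^\top\right) = \tr(\mathbf{C}^\top\mathbf{B}^\top\mathbf{A}^\top) = \tr(\mathbf{CBA}) = \tr(\mathbf{BAC}),
\]
where the final step is the cyclic property of the trace. Hence $\tr(\mathbf{ABC}) = \tr(\mathbf{BAC})$, so $\tr\left((\mathbf{AB}-\mathbf{BA})\mathbf{C}\right) = \tr(\mathbf{ABC}) - \tr(\mathbf{BAC}) = 0$, and combining this with the previous display yields the second equality.

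Since each step is a one-line identity, there is no genuine obstacle here; the only subtlety worth flagging is that symmetry of \emph{all three} matrices is actually used — it is precisely what allows $(\mathbf{ABC})^\top = \mathbf{CBA}$ to be matched, via cyclicity, against $\mathbf{ABC}$. When this lemma is later applied (for instance in the MSE comparison underlying Theorem~\ref{th.mse}), the roles of $\mathbf{A}$ and $\mathbf{B}$ are played by symmetric PSD filter matrices such as $\mathbf{H}(\omega_0)$ and $\mathbf{H}(\boldsymbol{\omega})$, and $\mathbf{C}$ by a noise covariance matrix or a rank-one signal outer product, so the hypotheses are met.
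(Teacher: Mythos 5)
Your proposal is correct and follows essentially the same route as the paper: expand $(\mathbf{A}-\mathbf{B})(\mathbf{A}+\mathbf{B})$ and kill the commutator term inside the trace using $\tr(\mathbf{X})=\tr(\mathbf{X}^\top)$, the symmetry of $\mathbf{A},\mathbf{B},\mathbf{C}$, and cyclicity, which is exactly the justification given in the paper's footnote. No gaps.
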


\begin{lemma} \label{lemma_nsd}
 Let $\bbA$ be an $n \times n$ positive semi-definite matrix, and $\bbB$ an $n \times n$ negative semi-definite matrix, then $\tr(\bbA\bbB) \leq 0$.
\end{lemma}

\begin{proof}
    Consider an eigenvalue-eigenvector equation of matrix $\bbA\bbB$ as follows
    \begin{equation} \label{eq:eigen_eq}
        \bbA\bbB\bbx = \lambda \bbx,
    \end{equation}
    where eigenvalue $\lambda$ is a scalar, and eigenvector $\bbx$ is a vector of length $n$. If we left multiply \eqref{eq:eigen_eq} by $\bbx^\top \bbB$ on both sides, then
    \begin{equation}
        \bbx^\top \bbB \bbA\bbB\bbx = \lambda \bbx^\top \bbB \bbx.
    \end{equation}
    The eigenvalue of $\bbA\bbB$ can be represented as
    \begin{equation}
        \lambda = \frac{\bbx^\top \bbB \bbA\bbB\bbx}{\bbx^\top \bbB \bbx}.
    \end{equation}
    From \cite[Thm. 7.2.7]{matana}, matrix $\bbB\bbA\bbB$ is positive semi-definite independent of matrix $\bbB$. Thus, the numerator $\bbx^\top \bbB\bbA \bbB \bbx $ is nonnegative.
    Since matrix $\bbB$ is negative semi-definite, the denominator $\bbx^\top \bbB \bbx$ is nonpositive. This results into a nonpositive eigenvalue $\lambda$. Thus, the trace of matrix $\bbA\bbB$ is nonpositive.
\end{proof}

\begin{lemma}\label{lemma4}
    Let $n\times n$ matrices $\mathbf{A,B}$ be positive semi-definite, then $\mathbf{A\odot B}$ is positive semi-definite \cite[Thm. 7.5.3]{matana} and
	\begin{equation}
	    \lambda_{\text{max}}(\mathbf{A\odot B})\leq \lambda_{\text{max}}(\mathbf{A})\text{max}\left\lbrace b_{ii}\right\rbrace 
	\end{equation}
	where $b_{ii}$ is the $i$-th diagonal element of $\mathbf{B}$ \cite[7.5.P24]{matana}. Since the inequality is given in the exercise part of \cite{matana}, so we give the proof below.
\end{lemma}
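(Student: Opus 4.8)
The plan is to derive the eigenvalue bound from the Schur product theorem (already invoked in the statement to get $\mathbf{A}\odot\mathbf{B}\succeq\mathbf{0}$) combined with the monotonicity of $\lambda_{\max}(\cdot)$ under the Loewner (positive semi-definite) order. First I would note two elementary facts: since $\mathbf{B}\succeq\mathbf{0}$, its diagonal entries obey $b_{ii}\ge 0$, so $\max\{b_{ii}\}\ge 0$ and the claimed bound makes sense; and since $\mathbf{A}\succeq\mathbf{0}$, we have $\lambda_{\max}(\mathbf{A})\ge 0$ (with both sides of the inequality equal to $0$ in the degenerate case $\mathbf{A}=\mathbf{0}$). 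I would also record that $\mathbf{I}\odot\mathbf{B}=\text{diag}(b_{11},\dots,b_{nn})$, since the Hadamard product with the identity zeroes out all off-diagonal entries.

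The key step is the following. By definition of the largest eigenvalue, $\lambda_{\max}(\mathbf{A})\mathbf{I}-\mathbf{A}\succeq\mathbf{0}$. Applying the Schur product theorem to this matrix and to $\mathbf{B}\succeq\mathbf{0}$ gives $(\lambda_{\max}(\mathbf{A})\mathbf{I}-\mathbf{A})\odot\mathbf{B}\succeq\mathbf{0}$. Expanding the Hadamard product by bilinearity,
\[
(\lambda_{\max}(\mathbf{A})\mathbf{I}-\mathbf{A})\odot\mathbf{B}
=\lambda_{\max}(\mathbf{A})(\mathbf{I}\odot\mathbf{B})-\mathbf{A}\odot\mathbf{B}
=\lambda_{\max}(\mathbf{A})\,\text{diag}(b_{ii})-\mathbf{A}\odot\mathbf{B},
\]
so that $\mathbf{A}\odot\mathbf{B}\preceq \lambda_{\max}(\mathbf{A})\,\text{diag}(b_{11},\dots,b_{nn})$.

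To conclude, I would invoke monotonicity of $\lambda_{\max}$ under $\preceq$ (immediate from the Rayleigh quotient characterization $\lambda_{\max}(\mathbf{X})=\max_{\|\mathbf{v}\|_2=1}\mathbf{v}^\top\mathbf{X}\mathbf{v}$): since $\lambda_{\max}(\mathbf{A})\ge 0$, pulling the scalar through $\lambda_{\max}$ yields
$\lambda_{\max}(\mathbf{A}\odot\mathbf{B})\le \lambda_{\max}(\mathbf{A})\,\lambda_{\max}(\text{diag}(b_{ii}))=\lambda_{\max}(\mathbf{A})\max\{b_{ii}\}$, using that the largest eigenvalue of a diagonal matrix is its largest diagonal entry. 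This is essentially a routine argument; the only points needing a moment of care are the identification $\mathbf{I}\odot\mathbf{B}=\text{diag}(b_{ii})$ and the sign bookkeeping (namely $\lambda_{\max}(\mathbf{A})\ge 0$ and $b_{ii}\ge 0$), which is what lets scalars move freely in and out of $\lambda_{\max}(\cdot)$ and what makes the final bound nonnegative as it must be for a positive semi-definite matrix.
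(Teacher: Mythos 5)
Your proof is correct and follows essentially the same route as the paper: both apply the Schur product theorem to $(\lambda_{\max}(\mathbf{A})\mathbf{I}-\mathbf{A})\odot\mathbf{B}$, identify $\mathbf{I}\odot\mathbf{B}$ with $\mathrm{diag}(b_{11},\dots,b_{nn})$, and conclude via the Rayleigh quotient — the paper evaluates the resulting quadratic-form bound at the top eigenvector of $\mathbf{A}\odot\mathbf{B}$, which is exactly your Loewner-monotonicity step. The extra sign bookkeeping you include ($\lambda_{\max}(\mathbf{A})\ge 0$, $b_{ii}\ge 0$) is a harmless refinement of the same argument.
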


\begin{proof}
	Let $\lambda_{n}$ be the maximal eigenvalue of $\mathbf{A}$, then 
	$\lambda_n\mathbf{I}-\bbA \succeq \mathbf{0}$. Thus, we have $(\lambda_n\mathbf{I}-\bbA)\odot \mathbf{B}$ is positive semi-definite. Let $\mathbf{x}\in\mathbb{R}^n$ be a nonzero vector, then 
	\begin{equation}
		\begin{split}
			& \mathbf{x}^\top \left( (\mathbf{A}-\lambda_n\mathbf{I})\odot \mathbf{B}\right) \mathbf{x} \\ & =\mathbf{x}^\top \left( \mathbf{A}\odot \mathbf{B}\right)\mathbf{x} - \lambda_n\mathbf{x}^\top \left(\mathbf{I}\odot \mathbf{B}\right) \mathbf{x}\leq0,
		\end{split}
	\end{equation}
 which leads to
	\begin{equation}
		\begin{split}
		\mathbf{x}^\top \left( \mathbf{A}\odot \mathbf{B}\right)\mathbf{x} & \leq \lambda_n\mathbf{x}^\top \left(\mathbf{I}\odot \mathbf{B}\right) \mathbf{x} \\ 
		& = \lambda_n\sum_{i=1}^{n}b_{ii}|x_i|^2 \leq \lambda_n\text{max}\left\lbrace b_{ii} \right\rbrace \lVert \mathbf{x}\lVert^2.
		\end{split}
	\end{equation}
	If $\mathbf{x}$ is the eigenvector, the claim holds competing the proof.
\end{proof}
\begin{theorem} \label{theorem2}
	(Weyl's Inequality)\cite[Thm. 4.3.1]{matana} Let $\mathbf{A},\mathbf{B}$ be $n\times n$ Hermitian matrices and let the respective eigenvalues of $\mathbf{A,B}$ and $\mathbf{A+B}$ be $\left\lbrace \lambda_i(\mathbf{A})\right\rbrace^n_{i=1} $, $\left\lbrace \lambda_i(\mathbf{B})\right\rbrace^n_{i=1} $ and $\left\lbrace \lambda_i(\mathbf{A+B})\right\rbrace^n_{i=1} $, each of which is algebraically ordered as $\lambda_{\text{min}}=\lambda_1\leq \lambda_2\leq \dots\leq\lambda_{n-1}\leq\lambda_{n}=\lambda_{\text{max}}$. Then
	\begin{equation}
		\lambda_i(\mathbf{A+B})\leq\lambda_{i+j}(\mathbf{A})+\lambda_{n-j}(\mathbf{B}),\, j=0,1,\dots,n-i
	\end{equation}
	for each $i=1,\dots,n$, with equality for some pair $(i,j)$ if and only if there is a nonzero vector $\mathbf{x}$ such that $\mathbf{Ax}=\lambda_{i+j}(\mathbf{A})\mathbf{x}$, $\mathbf{Bx}=\lambda_{n-j}(\mathbf{B})\mathbf{x}$ and $\mathbf{(A+B)x}=\lambda_{i}(\mathbf{A+B})\mathbf{x}$. Also,
	\begin{equation}
	\lambda_{i-j+1}(\mathbf{A})+\lambda_{j}(\mathbf{B})\leq\lambda_{i}(\mathbf{A+B}),\, j=1,\dots,i
	\end{equation}
	for each $i=1,\dots,n$, with equality for some pair $(i,j)$ if and only if there is a nonzero vector $\mathbf{x}$ such that $\mathbf{Ax}=\lambda_{i-j+1}(\mathbf{A})\mathbf{x}$, $\mathbf{Bx}=\lambda_{j}(\mathbf{B})\mathbf{x}$ and $\mathbf{(A+B)x}=\lambda_{i}(\mathbf{A+B})\mathbf{x}$. If $\mathbf{A}$ and $\mathbf{B}$ have no common eigenvector, then every inequality above is a strict one.
\end{theorem}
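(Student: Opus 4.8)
The plan is to derive both eigenvalue bounds, and their equality conditions, from the Courant--Fischer variational characterization of the eigenvalues of a Hermitian matrix together with an elementary count on intersecting subspaces. Writing $R_{\mathbf{M}}(\mathbf{x}) := \langle \mathbf{x}, \mathbf{M}\mathbf{x}\rangle / \langle \mathbf{x},\mathbf{x}\rangle$ for the Rayleigh quotient, I would take as the sole external fact (itself a standard consequence of the spectral theorem) that for Hermitian $\mathbf{M}$ with eigenvalues ordered increasingly, $\lambda_i(\mathbf{M}) = \min_{\dim\mathcal{S}=i}\max_{0\ne\mathbf{x}\in\mathcal{S}} R_{\mathbf{M}}(\mathbf{x})$ and, dually, $\lambda_i(\mathbf{M}) = \max_{\dim\mathcal{S}=n-i+1}\min_{0\ne\mathbf{x}\in\mathcal{S}} R_{\mathbf{M}}(\mathbf{x})$. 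The additivity $R_{\mathbf{A}+\mathbf{B}}(\mathbf{x}) = R_{\mathbf{A}}(\mathbf{x}) + R_{\mathbf{B}}(\mathbf{x})$ is what couples the two summands.

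For the first bound, I would let $\mathcal{S}_1$ be the span of the eigenvectors of $\mathbf{A}$ for $\lambda_1(\mathbf{A}),\dots,\lambda_{i+j}(\mathbf{A})$ and $\mathcal{S}_2$ the span of the eigenvectors of $\mathbf{B}$ for $\lambda_1(\mathbf{B}),\dots,\lambda_{n-j}(\mathbf{B})$, so that $R_{\mathbf{A}}\le\lambda_{i+j}(\mathbf{A})$ on $\mathcal{S}_1$ and $R_{\mathbf{B}}\le\lambda_{n-j}(\mathbf{B})$ on $\mathcal{S}_2$. Since $\dim\mathcal{S}_1+\dim\mathcal{S}_2=(i+j)+(n-j)=n+i$, we get $\dim(\mathcal{S}_1\cap\mathcal{S}_2)\ge i$; choosing any $i$-dimensional $\mathcal{T}\subseteq\mathcal{S}_1\cap\mathcal{S}_2$, additivity gives $R_{\mathbf{A}+\mathbf{B}}\le\lambda_{i+j}(\mathbf{A})+\lambda_{n-j}(\mathbf{B})$ throughout $\mathcal{T}$, and inserting $\mathcal{T}$ into the min--max form yields $\lambda_i(\mathbf{A}+\mathbf{B})\le\lambda_{i+j}(\mathbf{A})+\lambda_{n-j}(\mathbf{B})$. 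The second bound is the mirror image: using the max--min form with the top eigenspaces $\mathcal{U}_1$ of $\mathbf{A}$ (for $\lambda_{i-j+1}(\mathbf{A}),\dots,\lambda_n(\mathbf{A})$) and $\mathcal{U}_2$ of $\mathbf{B}$ (for $\lambda_j(\mathbf{B}),\dots,\lambda_n(\mathbf{B})$), the same dimension count gives an $(n-i+1)$-dimensional subspace on which $R_{\mathbf{A}+\mathbf{B}}\ge\lambda_{i-j+1}(\mathbf{A})+\lambda_j(\mathbf{B})$, whence $\lambda_i(\mathbf{A}+\mathbf{B})\ge\lambda_{i-j+1}(\mathbf{A})+\lambda_j(\mathbf{B})$; alternatively it follows from the first bound applied to $-\mathbf{A},-\mathbf{B}$ through $\lambda_k(-\mathbf{M})=-\lambda_{n+1-k}(\mathbf{M})$ and an index relabelling.

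For the equality characterization, the backward direction is immediate: if a nonzero $\mathbf{x}$ satisfies the three eigenvector relations, then $\lambda_i(\mathbf{A}+\mathbf{B})\mathbf{x}=(\mathbf{A}+\mathbf{B})\mathbf{x}=\mathbf{A}\mathbf{x}+\mathbf{B}\mathbf{x}=(\lambda_{i+j}(\mathbf{A})+\lambda_{n-j}(\mathbf{B}))\mathbf{x}$ equates the scalars. For the forward direction I would use that equality collapses the chain above: if $\lambda_i(\mathbf{A}+\mathbf{B})=\lambda_{i+j}(\mathbf{A})+\lambda_{n-j}(\mathbf{B})$, then the continuous $R_{\mathbf{A}+\mathbf{B}}$ attains, at some $\mathbf{x}_0\in\mathcal{T}\subseteq\mathcal{S}_1\cap\mathcal{S}_2$, the value $\lambda_{i+j}(\mathbf{A})+\lambda_{n-j}(\mathbf{B})$. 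As $R_{\mathbf{A}}(\mathbf{x}_0)\le\lambda_{i+j}(\mathbf{A})$ and $R_{\mathbf{B}}(\mathbf{x}_0)\le\lambda_{n-j}(\mathbf{B})$ with sum equal to $\lambda_{i+j}(\mathbf{A})+\lambda_{n-j}(\mathbf{B})$, both must hold with equality; expanding $\mathbf{x}_0$ in the eigenbasis of $\mathbf{A}$ inside $\mathcal{S}_1$, a Rayleigh quotient equal to the largest available eigenvalue concentrates all spectral weight on the $\lambda_{i+j}(\mathbf{A})$-eigenspace, so $\mathbf{A}\mathbf{x}_0=\lambda_{i+j}(\mathbf{A})\mathbf{x}_0$, and likewise $\mathbf{B}\mathbf{x}_0=\lambda_{n-j}(\mathbf{B})\mathbf{x}_0$; adding gives $(\mathbf{A}+\mathbf{B})\mathbf{x}_0=\lambda_i(\mathbf{A}+\mathbf{B})\mathbf{x}_0$, the sought common eigenvector. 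The final clause (no common eigenvector $\Rightarrow$ strict inequalities) is then the contrapositive applied to every admissible $(i,j)$, and the second family is handled analogously.

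I expect the forward equality direction to be the main obstacle, specifically the passage from ``$R_{\mathbf{A}}(\mathbf{x}_0)$ attains its extreme value on the restricted subspace $\mathcal{S}_1$'' to ``$\mathbf{x}_0$ is a genuine eigenvector of $\mathbf{A}$''; this rests on the spectral-average view of the Rayleigh quotient, which attains an endpoint only when the spectral weight concentrates, together with the care needed to ensure the maximizer actually lies in $\mathcal{S}_1\cap\mathcal{S}_2$ so that both eigenvalue bounds are simultaneously tight. Bookkeeping of the index ranges $j=0,\dots,n-i$ and $j=1,\dots,i$, so that every subspace dimension stays in $\{1,\dots,n\}$, should also be verified but is routine; the inequalities themselves are straightforward once Courant--Fischer and the intersection bound are available.
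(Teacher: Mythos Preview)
The paper does not actually prove this theorem: Theorem~\ref{theorem2} is stated in Appendix~A purely as a quoted result from \cite[Thm.~4.3.1]{matana} and is then invoked as a tool in the proof of Theorem~\ref{th.mse}. There is therefore no ``paper's own proof'' to compare against.

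That said, your proposal is the standard and correct route to Weyl's inequalities. The min--max/max--min variational principle, combined with the dimension count $\dim(\mathcal{S}_1\cap\mathcal{S}_2)\ge\dim\mathcal{S}_1+\dim\mathcal{S}_2-n$ and the additivity of Rayleigh quotients, yields both inequalities exactly as you outline; the alternative derivation of the lower bound via $\lambda_k(-\mathbf{M})=-\lambda_{n+1-k}(\mathbf{M})$ is also valid. Your treatment of the equality case is sound: the key step you flag---that equality forces the maximizer $\mathbf{x}_0$ to lie in $\mathcal{S}_1\cap\mathcal{S}_2$ and to saturate both Rayleigh bounds simultaneously---does go through, because for any $i$-dimensional $\mathcal{T}$ one has $\max_{\mathbf{x}\in\mathcal{T}}R_{\mathbf{A}+\mathbf{B}}(\mathbf{x})\ge\lambda_i(\mathbf{A}+\mathbf{B})$ by Courant--Fischer, so when Weyl is tight the maximum over $\mathcal{T}\subseteq\mathcal{S}_1\cap\mathcal{S}_2$ is pinned from both sides. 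The spectral-average argument then correctly forces $\mathbf{x}_0$ into the relevant eigenspaces of $\mathbf{A}$ and $\mathbf{B}$ separately. The index bookkeeping is routine, as you note.
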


\section{Proof of Lemma~\ref{le.res}} \label{appendix.B}
To show  $\text{var}(\boldsymbol{\omega})\leq \text{var}(\omega_0)$, given $\omega_i^2\geq \omega_0, \text{ for } i = 1,2,\dots,n$, it suffices to show  
\begin{equation}
	\begin{split}
		& \quad \text{var}(\boldsymbol{\omega})- \text{var}(\omega_0) =
		\tr(\mathbf{H}^2(\boldsymbol{\omega})\boldsymbol{\Sigma}) - \tr(\mathbf{H}^2(\omega_0)\boldsymbol{\Sigma})\\
		& =  \tr (\left[ (\mathbf{H}(\boldsymbol{\omega}) - \mathbf{H}(\omega_0))(\mathbf{H}(\boldsymbol{\omega}) + \mathbf{H}(\omega_0)) \right] \boldsymbol{\Sigma} ) \leq 0 \\
	\end{split} 
\end{equation}
where the second equality comes from Lemma~\ref{le.trace_equality}.
Since matrices $\boldsymbol{\Sigma}$,  $\mathbf{H}(\boldsymbol{\omega})$ and $ \mathbf{H}(\omega_0)$ are positive semi-definite by definition, the sum of two positive semi-definite matrices is also positive semi-definite. Thus, from Lemma \ref{lemma_nsd}, it suffices to show  $\mathbf{H}(\boldsymbol{\omega}) - \mathbf{H}(\omega_0)\preceq\mathbf{0}$ with the condition $\omega_i^2 > \omega_0$ for $i=1,\dots,N$.

With this condition, we have $\boldsymbol{\omega\omega}^\top\succeq\omega_0\mathbf{11}^\top$. Since $\text{diag}(\boldsymbol{\omega})\mathbf{L}\text{diag}(\boldsymbol{\omega})=\boldsymbol{\omega\omega}^\top\odot\mathbf{L}$ and $\omega_0\mathbf{L}=\omega_0\mathbf{11}^\top\odot\mathbf{L}$, we further have
\begin{equation}
\begin{split}
	(\mathbf{I}+\omega_0\mathbf{11}^\top\odot\mathbf{L}) & - (\mathbf{I}+\boldsymbol{\omega\omega}^\top\odot\mathbf{L}) \\ & =  (\omega_0\mathbf{11}^\top-\boldsymbol{\omega\omega}^\top) \odot \mathbf{L} \preceq \mathbf{0}
\end{split}
\end{equation}
where the equality holds because the Hadamard product is distributive over addition. Then, we left multiply both sides by $	(\mathbf{I}+\omega_0\mathbf{11}^\top\odot\mathbf{L})^{-1}$ and right multiply both sides by $(\mathbf{I}+\boldsymbol{\omega\omega}^\top\odot\mathbf{L})^{-1}$. This does not change the sign because they are both PSD. Hence, we have
\begin{equation}
\begin{split}
	(\mathbf{I}+\boldsymbol{\omega\omega}^\top\odot\mathbf{L})^{-1} &- (\mathbf{I}+\omega_0\mathbf{11}^\top\odot\mathbf{L})^{-1} \preceq \mathbf{0} \\
	\mathbf{H}(\boldsymbol{\omega}) &- \mathbf{H}(\omega_0)\preceq \mathbf{0} \\
\end{split}
\end{equation}
which completes the proof.

\section{Proof of Theorem~\ref{th.mse}} \label{appendix.C}
Proving $\text{mse}(\boldsymbol{\omega}) \leq \text{mse}(\omega_0)$ with the given conditions, is equivalent to showing $\Delta = \text{mse}(\boldsymbol{\omega})- \text{mse}(\omega_0)\leq0 $. We expand the latter as
\begin{equation}
\begin{split}
	\Delta & =  \tr  ( \left( \mathbf{I-H}(\boldsymbol{\omega})\right) ^2\mathbf{x^*x}^{*T}) + \tr(\mathbf{H}^2(\boldsymbol{\omega})\boldsymbol{\Sigma}) \\&  \quad - \tr (\left( \mathbf{I-H}(\omega_0)\right) ^2\mathbf{x^*x}^{*T}) + \tr(\mathbf{H}^2(\omega_0)\boldsymbol{\Sigma}).
\end{split}
\end{equation}
Since $\mathbf{x^*x}^{*T}=\mathbf{P}\boldsymbol{\Sigma}$ holds by definition, by working out the above equation, we have
\begin{multline}
	\Delta  = \tr(
	\boldsymbol{\Sigma}
	\left( 
	\mathbf{H}(\omega_0)-\mathbf{H}(\boldsymbol{\omega})
	\right)\\ \cdot
	\left[ 
	2\mathbf{P}-\left( 
	\mathbf{H}(\boldsymbol{\omega})+\mathbf{H}(\omega_0)
	\right)\left( \mathbf{I}+\mathbf{P}\right) 
	\right] ).
\end{multline}
Due to the covariance matrix $\boldsymbol{\Sigma}\succeq \mathbf{0}$ and from the first condition~\eqref{eq.cond1} we have also that the filter difference is positive semi-definite $ 
\mathbf{H}(\omega_0)-\mathbf{H}(\boldsymbol{\omega})
\succeq \mathbf{0}$. Thus, from Lemma \ref{lemma_nsd}, it suffices to show that 
\begin{equation}
	2\mathbf{P}-\left( 
	\mathbf{H}(\boldsymbol{\omega})+\mathbf{H}(\omega_0)
	\right)\left( \mathbf{I}+\mathbf{P}\right)\preceq \mathbf{0}.
\end{equation}
Further, since $\mathbf{I+P}\succ\mathbf{0}$ and it is invertible, we can focus on proving the smallest eigenvalue is greater than or equal to zero, i.e.,
\begin{equation}
\lambda_{\text{min}}\left\lbrace 
\mathbf{H}(\boldsymbol{\omega})+\mathbf{H}(\omega_0)
-2\mathbf{P}\left( \mathbf{I}+\mathbf{P}\right)^{-1}
\right\rbrace  \geq 0.
\end{equation}
Let us then define the matrix $\boldsymbol{\Gamma}\triangleq \mathbf{P}\left( \mathbf{I}+\mathbf{P}\right)^{-1}$ and scalar $\gamma\in(0,1)$ as its only nonzero eigenvalue to simplify notations. From Theorem~\ref{theorem2}, we have
\begin{multline}
		\lambda_{\text{min}}\left\lbrace 
	\mathbf{H}(\boldsymbol{\omega})+\mathbf{H}(\omega_0)
	- 2\boldsymbol{\Gamma} \right\rbrace 
	 \geq 
	\lambda_{\text{min}}\left\lbrace 
	\mathbf{H}(\boldsymbol{\omega})\right\rbrace  \\
	 +
	\lambda_{\text{min}} \left\lbrace 
	\mathbf{H}(\omega_0) \right\rbrace  
	+	\lambda_{\text{min}} \left\lbrace 
	-2\boldsymbol{\Gamma} \right\rbrace.
\end{multline}
So, a sufficient condition for $\text{mse}(\boldsymbol{\omega}) \leq \text{mse}(\omega_0)$ is
\begin{equation}
	\lambda_{\text{min}}\left\lbrace 
	\mathbf{H}(\boldsymbol{\omega})\right\rbrace 
	+
	\lambda_{\text{min}}\left\lbrace 
	\mathbf{H}(\omega_0)
	\right\rbrace -
	\lambda_{\text{max}}
	\left\lbrace 
	2\boldsymbol{\Gamma}
	\right\rbrace \geq 0
\end{equation}
where $\lambda_{\text{max}}
\left\lbrace 
2\boldsymbol{\Gamma} \right\rbrace = 2\gamma \in (0,2) $,
 and $	\lambda_{\text{min}}\left\lbrace 
\mathbf{H}(\boldsymbol{\omega})\right\rbrace$, $\lambda_{\text{min}}\left\lbrace 
\mathbf{H}(\omega_0)
\right\rbrace$ can be found as follows. From the eigen-decomposition 
\begin{equation}
    \mathbf{H}(\omega_0)\triangleq(\mathbf{I+\omega_0L})^{-1}=\sum_{i=1}^{N}\frac{1}{1+\omega_0\lambda_{i}}\mathbf{u}_i\mathbf{u}_i^\top
\end{equation}
with $\lambda_{i}$ being the $i$-th eigenvalue of the Laplacian $\mathbf{L}$, and $\mathbf{u}_i$ the corresponding eigenvector, we have
\begin{equation}
\lambda_{\text{min}}\left\lbrace 
\mathbf{H}(\omega_0)
\right\rbrace = \lambda_{\text{min}}\left\lbrace 
(\mathbf{I+\omega_0L})^{-1}
\right\rbrace=  \frac{1}{1+\omega_0\lambda_{\text{max}}(\mathbf{L})}.
\end{equation}
From Lemma~\ref{lemma4}, we have
\begin{equation}
    \lambda_{\text{max}}\left\lbrace\boldsymbol{\omega\omega}^\top\odot \mathbf{L}  \right\rbrace \leq \lambda_{\text{max}}(\mathbf{L})\text{max}\left\lbrace \omega_i^2 \right\rbrace
\end{equation}
which results in
\begin{equation}
    \lambda_{\text{max}}\left\lbrace \mathbf{I}+\boldsymbol{\omega\omega}^\top\odot \mathbf{L}  \right\rbrace \leq 1+\lambda_{\text{max}}(\mathbf{L})\text{max}\left\lbrace \omega_i^2 \right\rbrace.
\end{equation}
The lower bounds of $\lambda_{\text{min}}\left\lbrace\mathbf{H}(\boldsymbol{\omega})\right\rbrace$ then follows
\begin{equation}
\lambda_{\text{min}}\left\lbrace
(\mathbf{I}+\boldsymbol{\omega\omega}^\top\odot \mathbf{L})^{-1}  \right\rbrace \geq
\frac{1}{1+\lambda_{\text{max}}(\mathbf{L})\text{max}\left\lbrace \omega_i^2 \right\rbrace }.
\end{equation}
Finally, the sufficient conditions for $\text{mse}(\boldsymbol{\omega}) \leq \text{mse}(\omega_0)$ are
\begin{subequations}
	\begin{align}
		&\omega_i\geq \sqrt{\omega_0}>0, \text{ for } i = 1,2,\dots,n\\
		&\frac{1}{1+\lambda_{\text{max}}(\mathbf{L})\text{max}\left\lbrace \omega_i^2 \right\rbrace } + \frac{1}{1+\omega_0\lambda_{\text{max}}(\mathbf{L})} \geq 2\gamma
	\end{align}
\end{subequations}
which completes the proof.

\section{Proof of Corollary~\ref{cor1}} \label{appendix.D}
If we let the following two hold
\begin{subequations}
\begin{align}
    & \frac{1}{1+\lambda_{\text{max}}(\mathbf{L})\text{max}\left\lbrace \omega_i^2 \right\rbrace } \geq \gamma \\
	& \frac{1}{1+\omega_0\lambda_{\text{max}}(\mathbf{L})} \geq \gamma
\end{align}
\end{subequations}
then we have the condition~\eqref{eq.cond1} sufficiently satisfied. Since $\gamma = \rho/(1+\rho) = 1/(1+\frac{1}{\rho})$, we substitute this relation into above two conditions and have
\begin{subequations}
\begin{align}
		&\text{max}\left\lbrace \omega_i^2 \right\rbrace \lambda_{\text{max}}(\mathbf{L}) \leq \frac{1}{\rho}\\
		&\omega_0\lambda_{\text{max}}(\mathbf{L})\leq \frac{1}{\rho} \label{eq.57b}\\
		&\omega_i\geq \sqrt{\omega_0}>0, \text{ for } i = 1,2,\dots,n.
\end{align}
\end{subequations}
Condition \eqref{eq.57b} can be omitted since $\text{max}\left\lbrace \omega_i^2 \right\rbrace \geq \omega_0$, which completes the proof.

\section{Simplifying the cost function for~\eqref{eq.probEnProny}} \label{appendix.E}
First, we expand the inner term as 
\begin{equation}
\begin{split}
& \mathbb{E} [\lVert \mathbf{y} -(\mathbf{I}+\boldsymbol{\Omega}\odot\mathbf{L})\mathbf{x^*} \lVert_2^2 ] \\ 
= & \mathbb{E} [ \tr\{[\mathbf{y-(I+\Omega\odot L)x^*}][\mathbf{y-(I+\Omega\odot L)x^*]}^\top\}] \\ 
= & \mathbb{E}[
\tr\{\mathbf{yy}^{T} - 2\mathbf{yx}^{*T}(\mathbf{I+\Omega\odot L}) + (\mathbf{I+\Omega\odot L})^2\mathbf{x^*x}^{*T}\}
].\\
\end{split}
\end{equation}
Since the trace operation can be switched with the expectation operation, the cost function is equal to
\begin{equation}
 \tr\{
 \mathbb{E}(\mathbf{yy}^\top) - 2\mathbb{E}(\mathbf{yx}^{*T})(\mathbf{I+\Omega\odot L}) + (\mathbf{I+\Omega\odot L})^2\mathbf{x^*x}^{*T}
\}.
\end{equation}
Further, since $\mathbb{E}(\mathbf{yy}^\top) = \boldsymbol{\Sigma}+\mathbf{x^*x}^{*T}$ and $\mathbb{E}(\mathbf{yx}^{*T}) = \mathbf{x^*x}^{*T}$, the above is equivalent to
\begin{equation}
\begin{split}
 &\tr\{
 \boldsymbol{\Sigma}+\mathbf{x^*x}^{*T} - 2 \mathbf{x^*x}^{*T}(\mathbf{I+\Omega\odot L}) + (\mathbf{I+\Omega\odot L})^2\mathbf{x^*x}^{*T}
\} \\ 
=& \tr\{
 \boldsymbol{\Sigma} + [\mathbf{I-(\mathbf{I+\Omega\odot L})}]^2\mathbf{x^*x}^{*T}
\}\\
=& \tr\{ \boldsymbol{\Sigma} + (\mathbf{\Omega\odot L})^2\mathbf{x^*x}^{*T}\}.
\end{split}
\end{equation}
As our optimization variable is $\bW$, we can drop the unrelated covariance matrix $\boldsymbol{\Sigma}$, which gives the cost function in problem~\eqref{eq.prony_relaxed}.

\section{Detailing the cost function in~\eqref{eq.sdp1}} \label{appendix.F}
The cost function in~\eqref{eq.sdp1} is
\begin{equation}
    \mathbb{E} \left\lbrace \lVert\mathbf{H}(\bW)\mathbf{y} - \mathbf{x}^* \lVert_2^2 \right\rbrace
\end{equation}
which can be further expressed as 
\begin{equation}
    \mathbb{E}{\tr( \bbH^2(\bW)\bby\bby^\top   - 2\bbH(\bW) \bby \bx^{*\top} +\bx^*\bx^{*\top})}.
\end{equation}
Then, exchanging the expectation and trace operator, we obtain the cost function
\begin{equation}
    \tr\{ (\bbH^2(\bW))-2\bbH(\bW)+\bbI)\bbx^*\bbx^{*\top} + \bbH^2(\bW)\bbSigma \}.
\end{equation}
From this expression, we indeed see it includes both the signal and the noise. 

\end{appendices}

\bibliography{ref} 

\bibliographystyle{ieeetr}

\end{document}